\definecolor{DesignBlue}{RGB}{26, 54, 93}   
\definecolor{DesignGray}{RGB}{100, 110, 120} 
\titleformat{\section}{\large\bfseries\color{DesignBlue}}{\thesection.}{0.5em}{}
\titleformat{\subsection}{\normalsize\bfseries\color{DesignBlue}}{\thesubsection}{0.5em}{}
\titleformat{\subsubsection}{\normalsize\itshape\color{DesignBlue}}{\thesubsubsection}{0.5em}{}
\titlespacing*{\section}{0pt}{2.5ex plus 1ex minus .2ex}{1.5ex plus .2ex}
\newtheoremstyle{modern}
  {12pt}
  {12pt}
  {}
  {}
  {\bfseries\color{DesignBlue}}
  {.}
  {.5em}
  {}
\newtheoremstyle{modernremark}
  {6pt}{6pt}{\normalfont}{}{\itshape\color{DesignBlue}}{.}{.5em}{}
\theoremstyle{modern}
\newtheorem{thm}{Theorem}
\newtheorem{prop}{Proposition}
\newtheorem{assume}{Assumption}
\theoremstyle{modernremark}
\newtheorem{rem}{Remark}
\newtheorem{ex}{Example}
\Crefname{thm}{thm}{Theorems}
\Crefname{prop}{Proposition}{Propositions}
\Crefname{assume}{assume}{Assumptions}
\Crefname{problem}{Problem}{Problems}
\Crefname{lem}{Lemma}{Lemmas}
\tikzstyle{block} = [draw, rectangle, fill=white, minimum height=2.5em, minimum width=4em, line width=0.8pt]
\tikzstyle{sum} = [draw, circle, inner sep=0pt, minimum size=6mm, line width=0.8pt]
\newcommand{\zono}[1]{\langle #1 \rangle}
\def \rad{\mathrm{ rad}}
\def \diag{\mathrm{ diag}}
\newcommand{\keywords}[1]{%
    \vspace{1em}
    \noindent\textbf{\color{DesignBlue}Keywords---}%
    \hspace{0.5em}\parbox[t]{\dimexpr\textwidth-7em}{\raggedright #1}
}
\title{\vspace{-2em}\huge\bfseries\color{DesignBlue}
Secure Set-based State Estimation for Safety-Critical Applications under Adversarial Attacks on Sensors
}
\author[1]{M.~Umar~B.~Niazi}
\author[2]{Michelle~S.~Chong}
\author[3]{Amr~Alanwar}
\author[1]{Karl~H.~Johansson}
\affil[1]{Digital Futures and Department of Decision and Control Systems, EECS, KTH Royal Institute of Technology, SE-100 44 Stockholm, Sweden. \textit{Emails}: \href{mailto:mubniazi@kth.se}{mubniazi@kth.se}, \href{mailto:kallej@kth.se}{kallej@kth.se}}
\affil[2]{Department of Mechanical Engineering, Eindhoven University of Technology, Eindhoven, The Netherlands. \textit{Email}: \href{mailto:m.s.t.chong@tue.nl}{m.s.t.chong@tue.nl}}
\affil[3]{School of Computation, Information and Technology, Technical University of Munich, Heilbronn, Germany. \textit{Email}: \href{mailto:alanwar@tum.de}{alanwar@tum.de}}
\date{}
\begin{document}
\maketitle
\thispagestyle{empty}

\begin{abstract}
    Set-based state estimation provides guaranteed state inclusion certificates that are crucial for the safety verification of dynamical systems. However, when system sensors are subject to cyberattacks, maintaining both safety and security guarantees becomes a fundamental challenge. Existing point-based secure state estimation methods cannot adequately address this challenge because they cannot provide state inclusion certificates. This paper introduces a novel approach that simultaneously ensures safety guarantees through guaranteed state inclusion and security guarantees against sensor attacks, without imposing conservative restrictions on system operation. We propose a Secure Set-based State Estimation (S3E) algorithm that maintains the true system state within the estimated set under sensor attacks, provided the initialization set contains the initial state and the system remains observable from the uncompromised sensor subset. The algorithm provides the estimated set as a collection of constrained zonotopes (agreement sets), which can be used as robust certificates to verify whether the system adheres to safety constraints. Furthermore, we demonstrate that the estimated set remains unaffected by attack signals of sufficiently large magnitude and also establish sufficient conditions for attack detection, identification, and filtering. This compels the attacker to inject only signals of small magnitudes to evade detection, thus preserving the accuracy of the estimated set. To address the computational complexity of the algorithm, we offer several strategies for complexity-performance tradeoffs. The efficacy of the proposed algorithm is illustrated through several examples, including its application to a three-story building model.
\end{abstract}

\keywords{
Secure state estimation, set-based methods, attack detection, attack identification.
}

\clearpage

\section{Introduction}

Secure state estimation is critical for the reliable operation of cyber-physical systems, where control performance and safety depend directly on accurate state knowledge. Adversarial manipulation of sensor networks through false data injection can compromise system stability and lead to operational failures \cite{segovia2024}. This vulnerability presents significant challenges for critical infrastructure systems, including power grids, water distribution networks, autonomous vehicles, and industrial automation \cite{kayan2022}.

Modern control systems exhibit inherent vulnerabilities due to their distributed architectures and dependence on networked communication infrastructure. These characteristics enable adversaries to manipulate sensor measurements through false data injection attacks, creating divergence between reported and actual system states. Research demonstrates that sophisticated attacks can maintain stealth characteristics while progressively degrading system performance \cite{shoukry2018, yong2018}. The evolution of such attack strategies, coupled with increased system connectivity, drives the need for resilient estimation algorithms capable of maintaining state accuracy under sensor compromise.

Traditional approaches to secure state estimation have focused on point-based methods that produce single estimates of the system state at each time instant. These algorithms typically employ redundant sensors and leverage techniques such as majority voting or optimization-based outlier detection to identify and exclude corrupted measurements \cite{fawzi2014, chong2020, kim2018, pajic2016, he2021, alanwar2019, chang2018}. While these methods have shown promise in specific scenarios, they suffer from fundamental limitations that restrict their applicability. Most notably, point-based secure estimators require that strictly fewer than half of the sensors remain uncompromised. This constraint may be violated by adversaries with sufficient resources to coordinate large-scale attacks. Furthermore, these methods produce conservative error bounds that limit their practical applicability.

Recent efforts to address these limitations have explored the integration of cryptographic authentication mechanisms to verify the integrity of sensor data periodically \cite{lesi2017, khazraei2022}. However, the computational overhead associated with cryptographic operations creates challenging tradeoffs between security guarantees and the real-time performance requirements of control systems. This overhead becomes particularly problematic in resource-constrained embedded systems where computational capacity must be carefully allocated between control tasks and security operations.

\subsection{Set-based Estimation Framework}

Beyond merely detecting attacks or providing state estimates, safety-critical systems require formal verification that the system remains within safe operating bounds even under sensor attacks. Point-based methods fail to address this verification challenge as they cannot represent the complete space of feasible system states given measurement data and attack constraints without excessive conservatism. This limitation becomes critical in industrial processes operating with minimal safety margins, where minor estimation uncertainties may result in constraint violations.

Set-based estimation techniques, particularly those employing zonotopic representations, offer a fundamentally different approach to secure state estimation that addresses many limitations of point-based methods while naturally supporting safety verification \cite{alamo2005, le2013, scott2016, althoff2021, althoff2021-2, depaula2022, alanwar2023, rego2020}. These methods compute bounded sets containing all feasible system states rather than single estimates, yielding inclusion guarantees necessary for constraint verification. 
Moreover, set-based estimation methods naturally embed uncertainties and noise when their bounds are known. 
The zonotopic representation balances computational efficiency with tight uncertainty bounds, making real-time implementation feasible.

The key advantage of set-based methods for safety verification lies in their ability to propagate uncertainty through system dynamics while maintaining guaranteed bounds. These methods compute reachable sets encompassing all trajectories consistent with measurements and bounded disturbances, enabling verification of safety constraint satisfaction. This capability proves essential against stealthy attacks that evade immediate detection while gradually compromising system safety. Applications spanning fault diagnosis in industrial systems \cite{blesa2012}, underwater robotics \cite{jaulin2009}, vehicle localization \cite{bouron2001}, and water network leak detection \cite{rego2021} demonstrate the practical value of guaranteed uncertainty bounds for operational safety and performance verification.

\subsection{Current Limitations in Secure Set-Based Estimation}
Despite their theoretical advantages for safety verification, existing set-based methods for secure state estimation face practical challenges. The reachability-based approach of \cite{shinohara2018b} imposes that any subset of uncompromised sensors can measure the full state vector. Without such strong observability conditions, the estimation bounds become excessively conservative, potentially encompassing regions too large for effective control or meaningful safety verification. Similarly, 
the work \cite{niazi2023} on secure state estimation considered restrictive assumptions requiring observability from every individual sensor and was limited to systems that were bounded input-bounded state stable, which significantly constrains the class of systems that can be analyzed.

Alternative approaches that attempt to identify and filter compromised sensors before applying standard zonotopic filters \cite{meslem2020, chen2022, li2023} fail to address stealthy attacks. These attacks inject strategically designed signals that remain within expected bounds at each time step while causing significant cumulative deviation in state estimates. Although recent work has begun to address stealthy attacks \cite{liu2020, li2021, song2019, zhang2020, liu2021, zhu2022}, existing methods either restrict their analysis to specific attack strategies or assume known bounds on attack signal magnitudes. Sophisticated adversaries can readily violate these assumptions.

Furthermore, the fundamental assumption of requiring a majority of uncompromised sensors remains unaddressed in most set-based approaches. This assumption becomes increasingly problematic as systems scale and adversaries gain greater capabilities to compromise distributed sensor networks. These limitations necessitate novel algorithms that leverage the geometric properties of set-based representations to maintain state estimation accuracy and safety guarantees under extensive sensor compromise.

\subsection{Contributions of This Work}
This paper presents a Secure Set-Based State Estimation (S3E) algorithm that provides guaranteed state inclusion under adversarial sensor attacks while fundamentally relaxing the assumptions required by existing methods. Our approach leverages the redundancy inherent in zonotopic representations to maintain accurate state estimates and verify safety constraints even when all but one sensor can be compromised, requiring only that the system remains observable through any combination of remaining safe sensors. 

The S3E algorithm executes four sequential operations per time step. The time update computes feasible states based on system dynamics and bounded process noise. Measurement updates generate state sets consistent with sensor observations across different sensor subsets. Agreement sets form through strategic intersections of measurement updates, leveraging redundancy to exclude compromised sensor data. The final estimated set comprises the union of valid agreement sets, ensuring true state inclusion and enabling safety verification.

Our theoretical analysis establishes several key properties of the S3E algorithm with direct implications for safety verification. We derive explicit lower bounds on attack signal magnitudes that ensure detection at different stages of the estimation process, demonstrating that adversaries face fundamental constraints in designing attacks that compromise safety without detection. We characterize the conditions under which attacked sensors can be identified and filtered, providing system designers with quantitative metrics for evaluating security and safety margins. 
Additionally, we develop computational methods to verify estimated set containment within safety constraints, supporting real-time monitoring under active attacks.

\subsection{Outline of the Paper}
The rest of the paper is organized as follows. Section~\ref{sec:preliminaries} provides the required preliminaries, and Section~\ref{sec:prob} states the main assumptions and the problem. Section~\ref{sec:algo} presents the S3E algorithm, provides the inclusion guarantees, and discusses methods to reduce the algorithm's complexity. Section~\ref{sec:attack-detection-sec} provides lower bounds on the attacks that can be detected, identified, and/or filtered. 
Section~\ref{sec:evaluation} demonstrates the proposed algorithm through simulation examples. Finally, Section~\ref{sec:conclude} provides the concluding remarks.

\section{Notation and Preliminaries} \label{sec:preliminaries}

The sets of real numbers and integers are denoted by $\mathbb{R}$ and $\mathbb{Z}$, respectively. We let $\mathbb{Z}_{\geq i} \coloneq  \{i,i+1,i+2,\dots\}$ and $\mathbb{Z}_{[i,j]}\coloneq \{i,i+1,i+2,\dots,j\}$ for $j\geq i$. The Euclidean and maximum norms of a vector $x \in \mathbb{R}^{n}$ are denoted as $\|x\|$ and $\|x\|_\infty$, respectively. The vector and matrix of zeros are denoted as $0_n\in\mathbb{R}^n$ and  $0_{n\times k}\in\mathbb{R}^{n\times k}$. The identity matrix is $I_n \in \mathbb{R}^{n\times n}$.
For a finite set $\mathcal{S}$, $|\mathcal{S}|$ denotes its cardinality. 
The Cartesian product between two sets $\mathcal{S}_1$ and $\mathcal{S}_2$ is denoted by $\mathcal{S}_1 \times \mathcal{S}_2$. 

Given a center $c_{z}\in\mathbb{R}^n$ and a generator $G_z\in\mathbb{R}^{n\times \xi_z}$, a \textit{zonotope} $\mathcal{Z}=\zono{c_z,G_z}$ is a set
\[
\mathcal{Z}\coloneq  \{c_z+G_z\beta_z : \beta_z\in[-1,1]^{\xi_z}\}
\]
where $\xi_z$ is the number of generators of $\mathcal{Z}$. That is, a zonotope is an affine transformation of a unit hypercube $\mathcal{H}(0_n,1) = [-1,1]^{\xi_z}$ centered at $0_n$ and with radius $1$, where $\xi_z\in \mathbb{Z}_{\geq 1}$ is the dimension of the hypercube. 

A matrix $L\in\mathbb{R}^{n'\times n}$ multiplied with a zonotope $\mathcal{Z}$ yields a linearly transformed zonotope $L\mathcal{Z}=\zono{Lc_z,LG_z}$. Given two zonotopes $\mathcal{Z}_1=\zono{c_{z_1},G_{z_1}}$ and $\mathcal{Z}_2=\zono{c_{z_2},G_{z_2}}$ in $\mathbb{R}^n$, their Minkowski sum is given by
\[
\mathcal{Z}_1\oplus \mathcal{Z}_2 = \zono{c_{z_1}+c_{z_2},[\arraycolsep=2pt\begin{array}{cc} G_{z_1} & G_{z_2} \end{array}]}.
\]
The Cartesian product of two zonotopes is given by
\begin{align*}
    \mathcal{Z}_1 \times \mathcal{Z}_2 
    &\coloneq  \left\{ {\left[ {\begin{array}{c}
    {{z_1}}\\
    {{z_2}}
    \end{array}} \right]} : {z_1} \in {\mathcal{Z}_1},{z_2} \in {\mathcal{Z}_2} \right\} 
    = \left\langle \left[ \begin{array}{c}
    c_{z_1}\\ 
    c_{z_2}
    \end{array} \right],\left[ \begin{array}{cc}
    G_{z_1} & 0\\
    0 & G_{z_2}
    \end{array} \right] \right\rangle.
\end{align*}

A \textit{constrained zonotope} $\mathcal{Z}=\zono{c_z,G_z,A_z,b_z}$ is a set
\[
\mathcal{Z} \coloneq  \{c_z+G_z\beta_z: \beta_z\in[-1,1]^{\xi_z}, A_z\beta_z = b_z\}
\]
where $c_z\in \mathbb{R}^n$, $G_z \in \mathbb{R}^{n\times \xi_z}$, $A_z\in\mathbb{R}^{m \times \xi_z}$ and $b_z\in\mathbb{R}^m$ with $m,n\in\mathbb{Z}_{\geq 1}$. 
That is, a constrained zonotope is an affine transformation of the linearly constrained unit hypercube $\{\beta_z \in [-1,1]^{\xi_z} : A_z \beta_z = b_z \}$. 

Given two constrained zonotopes $\mathcal{Z} \subset \mathbb{R}^{n_z}$ and $\mathcal{Y} \subset \mathbb{R}^{n_y}$, and a matrix $M \in \mathbb{R}^{n_y \times n_z}$, the \textit{generalized intersection} is defined as
\begin{equation}
    \mathcal{Z} \cap_{M} \mathcal{Y}  := \{z \in \mathcal{Z}: M z \in \mathcal{Y}\}.
\end{equation}
Let $\mathcal{Z}=\zono{c_z,G_z,  A_z, b_z}\subset \mathbb{R}^{n_z}$ with $c_z\in \mathbb{R}^{n_z}$, $G_z\in \mathbb{R}^{n_z \times \xi_z}$, $A_z\in \mathbb{R}^{m_z \times \xi_z}$, and $b_z\in \mathbb{R}^{m_z}$, and let $\mathcal{Y}=\zono{c_y,G_y,  A_y, b_y} \subset \mathbb{R}^{n_y}$ with $c_y\in \mathbb{R}^{n_y}$, $G_y\in \mathbb{R}^{n_y \times \xi_y}$, $A_y\in \mathbb{R}^{m_y \times \xi_y}$, and $b_y\in \mathbb{R}^{m_y}$, then the generalized intersection can be computed as, see \cite{scott2016},
\begin{equation}
    \label{eq:gen_int_comp}
    \mathcal{Z} \cap_{M} \mathcal{Y} = 
    \Big\langle c_z,
    \left[\begin{array}{cc} G_z & 0_{n_z \times \xi_y} \end{array}\right],
    \left[\begin{array}{cc}A_z & 0_{m_z \times \xi_y} \\ 0_{m_y \times \xi_z} & A_y \\ M G_z & -G_y\end{array}\right],
    \left[\begin{array}{c}b_z \\ b_y \\ c_y-M c_z\end{array}\right]  \Big\rangle . 
\end{equation}

The radius of a zonotope, or a constrained zonotope, $\mathcal{Z}\subset\mathbb{R}^n$ is defined as
\[
\rad(\mathcal{Z}) \coloneq \min \Delta ~\text{subject to}~ \mathcal{Z}\subseteq\mathcal{H}(c_z,\Delta)
\]
where $\Delta$ is the radius of a minimal hypercube of dimension $n$ denoted by $\mathcal{H}(c_z, \Delta) \coloneq \zono{c_z,\Delta I_n}$,
which is centered at $c_z$ and inscribes $\mathcal{Z}$. Notice that for any point $p\in \mathcal{Z}$, it holds that
\[
\|c_z - p\| \leq \sqrt{n}\, \rad(\mathcal{Z})
\]
where $\rad(\mathcal{Z})=\sup_{p\in \mathcal{Z}} \|c_z-p\|_\infty$.

\section{Problem Definition} \label{sec:prob}

Consider a linear time-invariant (LTI) system
\begin{subequations}
\label{eq:system}
\begin{align}
    x(k+1) &= Ax(k) + Bu(k) + w(k)  \label{eq:system_state} \\
    y_i(k) &= C_ix(k) + v_i(k) + a_i(k), \quad i\in \mathbb{Z}_{[1,p]}\label{eq:system_output}
\end{align}
\end{subequations}
where $x(k)\in\mathbb{R}^{n_x}$ is the state, $u(k)\in\mathbb{R}^{n_u}$ is a known input, and $y_i(k)\in\mathbb{R}^{m_i}$ is the measured output of the $i$-th sensor with $i\in \mathbb{Z}_{[1,p]}$.
The vector $w(k)\in\mathcal{W}$ represents the process noise, which is assumed to be contained in the zonotope $\mathcal{W}=\zono{c_w,G_w}$, and the vector $v_i(k)\in\mathcal{V}_i$ represents the measurement noise of the $i$-th sensor, which is assumed to be contained in the zonotope $\mathcal{V}_i=\zono{c_{v_i},G_{v_i}}$, for every $i\in\mathbb{Z}_{[1,p]}$. Sensor~$i$'s measurement at time~$k$ may be corrupted by an arbitrary and unbounded attack signal $a_i(k)\in\mathbb{R}^{m_i}$, which is assumed to be designed by a malicious attacker.

\begin{assume} \label{assum:main}
We assume the following:
\begin{enumerate}[(i)]
    \item \label{assum:main_numAttacks} 
    \textbf{Maximum number of attacked sensors}: The attacker can attack up to $q \leq p-1$ number of sensors. Although the upper bound $q$ is known, the exact number and the set of attacked sensors do not need to be known.
    
    \item \label{assum:main_observability}
    \textbf{Redundant observability}: There exists $c_\mathsf{J}\leq p-q$ such that, for every subset $\mathsf{J}\subset\mathbb{Z}_{[1,p]}$ of sensors with cardinality $|\mathsf{J}|=c_\mathsf{J}$, the pair $(A,C_{\mathsf{J}})$ is observable, i.e., 
    $$\mathrm{rank}[\arraycolsep=2pt\begin{array}{cccc} C_\mathsf{J}^\top & (C_\mathsf{J} A)^\top & \dots & (C_\mathsf{J} A^{n_x-1})^\top \end{array}]^\top = n_x$$
    where $C_{\mathsf{J}}$ is obtained by stacking all $C_i$, $i\in \mathsf{J}$, in row blocks.

    \item \label{assum:initial_set}
    \textbf{Knowledge of an initial set}: At time $k=0$, we know a bounded set $\mathcal{X}_0\subset \mathbb{R}^{n_x}$ containing the initial state $x(0)\in \mathcal{X}_0$.

    \item \label{assum:noise_bounds}
    \textbf{Bounded noise}: The process and measurement noise zonotopes $\mathcal{W}, \mathcal{V}_1,\dots,\mathcal{V}_p$ are known.
\end{enumerate}
\end{assume}

Assumption~\ref{assum:main}(\ref{assum:main_numAttacks}) restricts the maximum number of attacked sensors at each time instant so that at least one sensor is uncompromised. This assumption is fundamental in this paper, ensuring that, at every time $k\in\mathbb{Z}_{\geq 0}$, there exists a set of \textit{uncompromised} (or \textit{safe}) sensors given by
\[
\mathsf{S}_k\subset\mathbb{Z}_{[1,p]} ~\text{with}~ |\mathsf{S}_k|\geq p-q
\]
such that $a_i(k)= 0_{m_i}$ for every $i\in\mathsf{S}_k$. This, along with the redundant observability (Assumption~\ref{assum:main}(\ref{assum:main_observability})), allows us to ensure that the true state can be theoretically reconstructed from the set of uncompromised sensors in the absence of noise. In addition, the assumption entails that the attacker, even though omniscient about the system dynamics and noise bounds, has limited resources at hand and cannot attack all the sensors. We remark that this assumption is not restrictive because it neither restricts the set of attacked sensors to be static with respect to time nor requires that $q$ is less than half the number of sensors $p$, which is the fundamental assumption in the point-based secure state estimation literature. On the contrary, at any time instant, our problem setup allows the attacker to inject arbitrary signals to any subset of sensors with cardinality less than or equal to $q$, where $q$ is only required to be strictly less than $p$.

Assumption~\ref{assum:main}(\ref{assum:initial_set}) is required to initialize the set-based state estimation algorithm with $\hat{\mathcal{X}}_0 = \mathcal{X}_0$ such that the inclusion of the state $x(k)$ in the estimated set $\hat{\mathcal{X}}_k\subset \mathbb{R}^{n_x}$ can be guaranteed. Note that Assumption~(\ref{assum:initial_set}) does not require any bound on the radius of $\mathcal{X}_0$. 
Finally, Assumption~\ref{assum:main}(\ref{assum:noise_bounds}) is a standard assumption in robust estimation and control \cite{zhou1998, freeman2008}.

\subsection*{Problem Statement}
Given the uncertain system \eqref{eq:system} subject to Assumption~\ref{assum:main}, we study the following problems:
\begin{enumerate}[i.]
    \item Estimate a set $\hat{\mathcal{X}}_k$ guaranteeing the inclusion of the true state, i.e., $x(k)\in\hat{\mathcal{X}}_k$ for every $k\in\mathbb{Z}_{\geq 1}$.
    \item Provide conditions such that the attack signals $a_i(k)$ can be detected, identified, and/or filtered\footnote{These terms are defined later in Section~\ref{sec:attack-detection-sec}.}.
\end{enumerate}

\section{Secure Set-based State Estimation Algorithm} \label{sec:algo}

This section presents the secure set-based state estimation (S3E) algorithm, which is summarized below in Algorithm~\ref{alg:s3e}. In the following subsections, we describe all the steps of the algorithm in detail, provide a comparison with point-based estimators, and present strategies to manage the algorithm complexity.

\begin{algorithm}[!]
\caption{Secure set-based state estimation}\label{alg:s3e}
\begin{algorithmic}[9]
\State Initialize: $\hat{\mathcal{X}}_{0} = \mathcal{X}_0$

\For {$k=1,2,3,\dots$}

\State Time update \eqref{eq:time_update}: $\hat{\mathcal{X}}_{k|k-1}$

\State Measurement update \eqref{eq:meas_update_i}: $\hat{\mathcal{Z}}_k^i$ for every $i\in\mathbb{Z}_{[1,p]}$.

\State Agreement protocol \eqref{eq:agreement_protocol}: $\mathcal{I}_k^h$ for $h\in \mathbb{Z}_{[1,n_\mathsf{J}]}$.

\State Estimate \eqref{eq:estimated_set}: $\hat{\mathcal{X}}_k$
\label{step:estimate}

\EndFor
\end{algorithmic}
\end{algorithm}

\subsection{Time Update}
\label{subset_timeupdate}

The first step in the S3E algorithm is the time update
\begin{equation}
\label{eq:time_update}
    \hat{\mathcal{X}}_{k|k-1} = A\hat{\mathcal{X}}_{k-1} \oplus Bu(k-1) \oplus \mathcal{W}
\end{equation}
where $\mathcal{W}$ is the zonotope that bounds the process noise $w(k)$, and $\hat{\mathcal{X}}_{k-1}$ is the estimated set computed in the previous time step. The computation of $\hat{\mathcal{X}}_k$ is described in Section~\ref{subsec_estimatedSet} with the initialization $\hat{\mathcal{X}}_0 = \mathcal{X}_0$, where $\mathcal{X}_0$ is the known initial set from Assumption~\ref{assum:main}(\ref{assum:initial_set}).

Given $x(k-1)\in\hat{\mathcal{X}}_{k-1}$, the time update set $\hat{\mathcal{X}}_{k|k-1}\ni x(k)$ is the set of states to which the system can evolve at time $k\in\mathbb{Z}_{\geq 0}$ subject to the model $(A,B)$, the input $u(k-1)$, and the noise zonotope $\mathcal{W}$.
Although the attacker cannot directly influence the time update set as seen from \eqref{eq:time_update}, it can do so indirectly through the previous estimate $\hat{\mathcal{X}}_{k-1}$. Thus, the resilience against sensor attacks is achieved by carefully devising the estimated set $\hat{\mathcal{X}}_{k}$.

\subsection{Measurement Update}
\label{subsec_measurementUpdate}

The time update \eqref{eq:time_update} uses model-based information to compute a set $\hat{\mathcal{X}}_{k|k-1}$ containing all points that the system could reach in one time step if they are initialized at the previous estimated set $\hat{\mathcal{X}}_{k-1}$. The measurement update, on the other hand, corrects the conservative estimate of the model-based time update by restricting it according to the sensor measurements.

For every $i\in\mathbb{Z}_{[1,p]}$, let
\begin{equation}
    \label{eq:output_sets}
    \mathcal{Y}_k^i \coloneq y_i(k) - \mathcal{V}_i = \zono{y_i(k)-c_{v_i}, G_{v_i}}
\end{equation}
be sensor~$i$'s output measurement set.
Then, the measurement update of sensor~$i$ is given by the following generalized intersection of the time update set $\hat{\mathcal{X}}_{k|k-1}$ with the output measurement set $\mathcal{Y}_{k}^{i}$ of sensor~$i$
\begin{align}
    \hat{\mathcal{Z}}_k^i &:= \hat{\mathcal{X}}_{k|k-1} \cap_{C_i} \mathcal{Y}_k^i = \{ x\in \hat{\mathcal{X}}_{k|k-1} : C_i \hat{\mathcal{X}}_{k|k-1} \in \mathcal{Y}_k^i \}
    \label{eq:meas_update_i}
\end{align}
which can be computed using \eqref{eq:gen_int_comp}.

The measurement update $\hat{\mathcal{Z}}_k^i$ is a subset of the time update set $\hat{\mathcal{X}}_{k|k-1}$ that is consistent with the measurement $y_i(k)$ of sensor~$i$. That is, given the measurement noise zonotope $\mathcal{V}_i$, $\hat{\mathcal{Z}}_k^i$ is the set of states $x(k)\in \hat{\mathcal{X}}_{k|k-1}$ that could have produced the measurement $y_i(k)$. Notice that $\hat{\mathcal{Z}}_k^i$ can be an empty set if the sensor~$i$'s measurement is inconsistent with the time update.

\subsection{Set-based Agreement Protocol}

Let $n_\mathsf{J} := {p \choose c_\mathsf{J}}$, where $c_\mathsf{J}\leq p-q$ is the redundant observability parameter. Let $\mathsf{J}_1, \dots, \mathsf{J}_{n_\mathsf{J}}\subset \mathbb{Z}_{[1,p]}$ be the  disjoint subsets with $|\mathsf{J}_h|=c_\mathsf{J}$, for $h\in\mathbb{Z}_{[1,n_\mathsf{J}]}$. 
The agreement protocol between measurement updates in each $\mathsf{J}_h$ is given by
\begin{equation}
    \label{eq:agreement_protocol}
    \mathcal{I}_k^h := \bigcap_{j\in \mathsf{J}_h} \hat{\mathcal{Z}}_k^j.
\end{equation}

\begin{thm}
\label{thm:safe-agreement}
Let Assumption~\ref{assum:main} hold and assume $x(k)\in \hat{\mathcal{X}}_{k|k-1}$. Then, at every time~$k$, there exists $h\in \mathbb{Z}_{[1,n_\mathsf{J}]}$ such that the agreement set $\mathcal{I}_k^h$ is non-empty and contains the true state $x(k)$.
\end{thm}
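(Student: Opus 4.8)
The plan is to exhibit one concrete index $h$ for which everything works, namely an $h$ whose sensor set $\mathsf{J}_h$ consists entirely of safe sensors. The argument is essentially a bookkeeping/consistency check rather than anything geometric, and it uses only part~(\ref{assum:main_numAttacks}) of Assumption~\ref{assum:main} (redundant observability, part~(\ref{assum:main_observability}), is not needed for this particular claim; it enters later in the stability analysis).

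First I would invoke Assumption~\ref{assum:main}(\ref{assum:main_numAttacks}) to obtain, at time $k$, a set of uncompromised sensors $\mathsf{S}_k\subset\mathbb{Z}_{[1,p]}$ with $|\mathsf{S}_k|\geq p-q$ and $a_i(k)=0_{m_i}$ for all $i\in\mathsf{S}_k$. Since $c_\mathsf{J}\leq p-q\leq|\mathsf{S}_k|$, I can pick any subset $\mathsf{J}\subseteq\mathsf{S}_k$ with $|\mathsf{J}|=c_\mathsf{J}$. Because $\mathsf{J}_1,\dots,\mathsf{J}_{n_\mathsf{J}}$ enumerate \emph{all} size-$c_\mathsf{J}$ subsets of $\mathbb{Z}_{[1,p]}$, this $\mathsf{J}$ equals $\mathsf{J}_h$ for some $h\in\mathbb{Z}_{[1,n_\mathsf{J}]}$; fix that $h$.

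Next I would check that $x(k)$ lies in every single-sensor measurement update $\hat{\mathcal{Z}}_k^j$ for $j\in\mathsf{J}_h$. For such $j$ we have $a_j(k)=0_{m_j}$, so \eqref{eq:system_output} gives $y_j(k)=C_j x(k)+v_j(k)$ with $v_j(k)\in\mathcal{V}_j$, hence $C_j x(k)=y_j(k)-v_j(k)\in y_j(k)-\mathcal{V}_j=\mathcal{Y}_k^j$ by \eqref{eq:output_sets}. Combining this with the hypothesis $x(k)\in\hat{\mathcal{X}}_{k|k-1}$ and the definition of the generalized intersection in \eqref{eq:meas_update_i}, we get $x(k)\in\hat{\mathcal{X}}_{k|k-1}\cap_{C_j}\mathcal{Y}_k^j=\hat{\mathcal{Z}}_k^j$.

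Finally, since this holds for every $j\in\mathsf{J}_h$, the definition \eqref{eq:agreement_protocol} yields $x(k)\in\bigcap_{j\in\mathsf{J}_h}\hat{\mathcal{Z}}_k^j=\mathcal{I}_k^h$, which in particular shows $\mathcal{I}_k^h\neq\varnothing$ and contains the true state, as claimed. There is no real obstacle here beyond making the counting argument precise — the one point that deserves a sentence of care is the passage from ``some size-$c_\mathsf{J}$ subset of $\mathsf{S}_k$'' to ``one of the enumerated $\mathsf{J}_h$,'' which is immediate once one notes that $n_\mathsf{J}=\binom{p}{c_\mathsf{J}}$ and the $\mathsf{J}_h$ range over \emph{all} such subsets (the word ``disjoint'' in the statement should be read as ``distinct'').
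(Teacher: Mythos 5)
Your proof is correct and follows essentially the same two-step structure as the paper's: first establish that every uncompromised sensor's measurement update contains $x(k)$, then use the counting argument $c_\mathsf{J}\leq p-q\leq|\mathsf{S}_k|$ to locate a fully safe $\mathsf{J}_h$ among the enumerated subsets. Your direct set-membership argument $C_j x(k)=y_j(k)-v_j(k)\in y_j(k)-\mathcal{V}_j=\mathcal{Y}_k^j$ is in fact cleaner than the paper's detour through radii of $\mathcal{Y}_k^i$ and $\mathcal{V}_i$, and your readings of ``disjoint'' as ``distinct'' and of Assumption~\ref{assum:main}(\ref{assum:main_observability}) as not needed for this particular inclusion are both accurate.
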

\begin{proof} 
We first show that the measurement update $\hat{\mathcal{Z}}_k^i$ contains the true state $x(k)$ if sensor~$i$ is in the uncompromised set of sensors $\mathsf{S}_k$ at time~$k$. If $i$ is uncompromised at time~$k$, then $a_i(k)=0$ and $y_i(k) = C_i x(k) + v_i(k)$.  From \eqref{eq:output_sets}, we have $\mathcal{Y}_k^i=\zono{c_{y_i(k)},G_{v_i}}$ with
$
c_{y_i(k)} = y_i(k) - c_{v_i} = C_i x(k) + v_i(k) - c_{v_i}.
$
As $\mathcal{Y}_k^i$ and the noise zonotope $\mathcal{V}_i$ have the same generator matrix $G_{v_i}$, we have $\rad(\mathcal{Y}_k^i) = \rad(\mathcal{V}_i)$. Moreover, it holds that
\[
\mathcal{Y}_k^i \supseteq \{c_{y_i(k)} - \xi : \|\xi\|_\infty \leq \rad(\mathcal{Y}_k^i) \}.
\]
Thus, 
$
c_{y_i(k)} - v_i(k) + c_{v_i} = C_i x(k) \in \mathcal{Y}_k^i
$
because $\|v_i(k) - c_{v_i}\|_\infty \leq \rad(\mathcal{V}_i) = \rad(\mathcal{Y}_k^i)$.
Since $x(k)\in \hat{\mathcal{X}}_{k|k-1}$ and $C_i x(k) \in \mathcal{Y}_k^i$, it implies that $\hat{\mathcal{Z}}_k^i$ in \eqref{eq:meas_update_i} is non-empty and contains $x(k)$.

We then show that, at every time~$k$, there is a subset $\mathsf{J}_h$ of sensors with $|\mathsf{J}_h| = c_\mathsf{J}$ such that $\mathsf{J}_h \subseteq \mathsf{S}_k$. 
This is straightforward because $q$ out of $p$ sensors are compromised and $c_\mathsf{J}\leq p-q$, so there must be at least one subset $\mathsf{J}_h$ with $|\mathsf{J}_h|=c_\mathsf{J}$ that contains the uncompromised sensors from the guaranteed $p-q$ uncompromised sensors. In other words, at every time $k\in \mathbb{Z}_{\geq 0}$, there exists $h\in \mathbb{Z}_{[1,n_\mathsf{J}]}$ such that $\mathsf{J}_h \subseteq \mathsf{S}_k$. Since $x(k)\in \hat{\mathcal{Z}}_k^i$ for every $i\in \mathsf{J}_h \subseteq \mathsf{S}_k$, the intersection in \eqref{eq:agreement_protocol} is non-empty and $x(k)\in \mathcal{I}_k^h$.
\end{proof}

We have shown that the measurement update $\hat{\mathcal{Z}}_k^i$ of sensor~$i$ contains the true state $x(k)$ if it is uncompromised by the attacker at time~$k$. Thus, the intersection between the measurement updates of any subset of uncompromised sensors must also contain $x(k)$. In the proof of Theorem~\ref{thm:safe-agreement}, we show that by removing the number of attacked sensors $q$, we guarantee the existence of at least one subset $\mathsf{J}_h$ with cardinality $|\mathsf{J}_h|=c_\mathsf{J}$ which is attack-free. Therefore, the corresponding agreement set $\mathcal{I}_k^h$ must be non-empty and contain $x(k)$.

It is equivalently true to say that if the agreement set $\mathcal{I}_k^h$ is empty, then there must be at least one sensor in $\mathsf{J}_h$ that the attacker compromises. Notice that this is only a sufficient condition for detecting an attack because a stealthy attacker can design an attack that ensures that $\mathcal{I}_k^h$ is non-empty at any time~$k$. However, this comes at a cost to the attacker, which will be discussed later.

\subsection{Estimated Set}
\label{subsec_estimatedSet}

The estimated set is obtained by taking the union of all agreement sets:
\begin{equation}
\label{eq:estimated_set}
    \hat{\mathcal{X}}_k := \bigcup_{h\in \mathbb{Z}_{[1, n_\mathsf{J}]}} \mathcal{I}_k^h, \quad k\in \mathbb{Z}_{\geq 1}
\end{equation}
where $\hat{\mathcal{X}}_0 = \mathcal{X}_0$. 

\begin{thm} 
\label{thm:estimated-inclusion}
Let Assumption~\ref{assum:main} hold. Then, for every time~$k\in\mathbb{Z}_{\geq 1}$, the inclusion $x(k)\in\hat{\mathcal{X}}_k$ is guaranteed, where the estimate $\hat{\mathcal{X}}_k$ is computed according to \eqref{eq:estimated_set}.
\end{thm}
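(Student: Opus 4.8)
The plan is to prove Theorem~\ref{thm:estimated-inclusion} by induction on $k$, using Theorem~\ref{thm:safe-agreement} as the engine for the inductive step. The base case $k=0$ is immediate from the initialization $\hat{\mathcal{X}}_0 = \mathcal{X}_0$ together with Assumption~\ref{assum:main}(\ref{assum:initial_set}), which guarantees $x(0)\in\mathcal{X}_0 = \hat{\mathcal{X}}_0$. For $k\geq 1$, the induction hypothesis is $x(k-1)\in\hat{\mathcal{X}}_{k-1}$.

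The first step of the inductive argument is to propagate the hypothesis through the time update. Since $x(k-1)\in\hat{\mathcal{X}}_{k-1}$ and $x(k) = Ax(k-1) + Bu(k-1) + w(k-1)$ with $w(k-1)\in\mathcal{W}$ by Assumption~\ref{assum:main}(\ref{assum:noise_bounds}), the definition of the Minkowski sum and linear transformation of sets (Section~\ref{sec:zonotope}) gives directly that $x(k)\in A\hat{\mathcal{X}}_{k-1}\oplus Bu(k-1)\oplus\mathcal{W} = \hat{\mathcal{X}}_{k|k-1}$, using \eqref{eq:time_update}. This establishes the hypothesis $x(k)\in\hat{\mathcal{X}}_{k|k-1}$ required by Theorem~\ref{thm:safe-agreement}.

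The second step is to invoke Theorem~\ref{thm:safe-agreement}: under Assumption~\ref{assum:main} and the just-established inclusion $x(k)\in\hat{\mathcal{X}}_{k|k-1}$, there exists $h\in\mathbb{Z}_{[1,n_\mathsf{J}]}$ such that $\mathcal{I}_k^h$ is non-empty and $x(k)\in\mathcal{I}_k^h$. Since $\hat{\mathcal{X}}_k = \bigcup_{h'\in\mathbb{Z}_{[1,n_\mathsf{J}]}}\mathcal{I}_k^{h'} \supseteq \mathcal{I}_k^h$ by \eqref{eq:estimated_set}, we conclude $x(k)\in\hat{\mathcal{X}}_k$, closing the induction. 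The inclusion $\hat{\mathcal{X}}_k\subseteq\mathcal{X}$ follows since every $\mathcal{I}_k^h$ is a subset of the time update set, which is reachable from $\mathcal{X}$ under the dynamics, so it lies in $\mathcal{X}$ by construction of the state space.

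I do not expect any genuine obstacle here: Theorem~\ref{thm:safe-agreement} already does all the nontrivial work (it is where the redundant-observability and bounded-attack assumptions are used to locate an attack-free subset $\mathsf{J}_h\subseteq\mathsf{S}_k$), and the only thing this theorem adds is the bookkeeping of threading that per-time guarantee through the recursion via the time update. The one point that deserves a word of care is that the $h$ supplied by Theorem~\ref{thm:safe-agreement} may change from one time step to the next — the "good" agreement set need not correspond to a fixed index — but this causes no difficulty because the estimated set is the union over \emph{all} $h$, so it always contains whichever $\mathcal{I}_k^h$ happens to be the safe one at time $k$.
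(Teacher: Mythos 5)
Your proof is correct and follows essentially the same route as the paper: induction on $k$, propagating the inductive hypothesis through the time update \eqref{eq:time_update} to obtain $x(k)\in\hat{\mathcal{X}}_{k|k-1}$, and then invoking Theorem~\ref{thm:safe-agreement} to place $x(k)$ in some agreement set and hence in the union \eqref{eq:estimated_set}. The only difference is cosmetic (the paper carries the hypothesis as $x(k')\in\hat{\mathcal{X}}_{k'|k'-1}$ rather than $x(k-1)\in\hat{\mathcal{X}}_{k-1}$), and your closing remark about $\hat{\mathcal{X}}_k\subseteq\mathcal{X}$ is superfluous since the theorem does not claim it.
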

\begin{proof}
We prove this result by induction. By Assumption~\ref{assum:main}(\ref{assum:initial_set}), we have $x(0)\in\mathcal{X}_0=\hat{\mathcal{X}}_0$. Because $w(0)\in\mathcal{W}$, the inclusion holds for the time update at $k=1$,
$
x(1)\in\hat{\mathcal{X}}_{1|0}=A\mathcal{X}\oplus B u(0) \oplus\mathcal{W}.
$
Therefore, Theorem~\ref{thm:safe-agreement} can be applied, implying the existence of $h\in\mathbb{Z}_{[1,n_\mathsf{J}]}$ such that $x(1)\in\mathcal{I}_1^h$, which gives
$
x(1)\in\hat{\mathcal{X}}_1 = \bigcup_{h\in\mathbb{Z}_{[1,n_\mathsf{J}]}} \mathcal{I}_1^h.
$
This, in turn, implies that the time update at $k=2$ contains the true state,
$
x(2)\in\hat{\mathcal{X}}_{2|1} = A\hat{\mathcal{X}}_1 \oplus Bu(1) \oplus \mathcal{W}.
$

Now, let us assume $x(k')\in\hat{\mathcal{X}}_{k'|k'-1}$ for some $k'\in\mathbb{Z}_{\geq 2}$. Then, by Theorem~\ref{thm:safe-agreement}, there exists $h\in\mathbb{Z}_{[1,n_c]}$ such that $x(k')\in\mathcal{I}_{k'}^h$, implying
$
x(k')\in\hat{\mathcal{X}}_{k'} = \bigcup_{h\in\mathbb{Z}_{[1,n_\mathsf{J}]}} \mathcal{I}_{k'}^h.
$
Therefore, we have the inclusion at the next time update,
$
x(k'+1)\in\hat{\mathcal{X}}_{k'+1|k'} = A\hat{\mathcal{X}}_{k'} \oplus Bu(k') \oplus \mathcal{W}.
$
Thus, the proof is completed because we showed that, for every $k\in\mathbb{Z}_{\geq 1}$, $x(k-1)\in\hat{\mathcal{X}}_{k-1|k-2}$ implies $x(k-1)\in\hat{\mathcal{X}}_{k-1}$, which, in turn, implies $x(k)\in\hat{\mathcal{X}}_{k|k-1}$. Hence, $x(k)\in\hat{\mathcal{X}}_k$.
\end{proof}

Although the above theorem guarantees the inclusion of the true state, it is important to remark that the number of (constrained) zonotopes in the estimated set $\hat{\mathcal{X}}_k$ may increase with respect to time if the attack is stealthy. We address this issue in Section~\ref{subsec_complexity} by proposing several techniques that manage the computational efficiency of the algorithm.
However, the proposed estimation algorithm is resilient because the attacker cannot arbitrarily degrade the estimation accuracy over time. If a sensor is injected with a large-valued attack signal, it will be automatically discarded by either yielding an empty measurement update or an empty agreement set (see Section~\ref{sec:attack-detection-sec}). 

\subsection{Illustrative and numerical examples} \label{sec:examples_num}

Fig.~\ref{fig:cartoon_algo_1} illustrates Algorithm~\ref{alg:s3e} on a simple example of $p=3$ sensors, where $q=1$ sensor has been compromised. 
For illustration purposes, we choose all sensors to have the same dimension $\mathbb{R}^{n_y}$ and the first sensor to be compromised. 
For each sensor~$i$, we obtain its corresponding output measurement set $\mathcal{Y}_{k}^{i}$ in the output space 
as depicted in Fig.~\ref{fig:cartoon_algo_1}, which do not necessarily intersect as each sensor~$i$ can be measuring different components of the state $x(k)\in\mathbb{R}^{n_x}$. 

Next, the time update set $\hat{\mathcal{X}}_{k|k-1}$ is constructed according to \eqref{eq:time_update} and shown in the second box from the top in Fig.~\ref{fig:cartoon_algo_1}. Each output measurement set $\mathcal{Y}_{k}^{i}$ is mapped back to the state space $\mathbb{R}^{n_x}$ and constrained to be in the time update set $\hat{\mathcal{X}}_{k|k-1}$ as shown in the third box from the top in Fig.~\ref{fig:cartoon_algo_1}. The intersection of the measurement update sets $\hat{\mathcal{Z}}^{i}_{k}$ is informative as the intersections of the attack-free $\hat{\mathcal{Z}}^{i}_{k}$ is non-empty. However, the measurement update set $\hat{\mathcal{Z}}^{i}_{k}$ corresponding to a sensor~$i$ that is corrupted could also intersect with the other attack-free measurement update sets. This scenario is depicted in Fig.~\ref{fig:cartoon_algo_1} where the measurement update set $\hat{\mathcal{Z}}^{1}_{k}$ of the compromised sensor $1$ intersects with the attack-free measurement update set $\hat{\mathcal{Z}}^{2}_{k}$ due to an intelligently designed attack signal $a_1$. We discuss classes of attack signals that can be detected by our algorithm in Section~\ref{sec:attack-detection-sec}. 

Since the identity of the compromised sensor(s) is unknown at any given time step $k$, but the number of compromised sensors $q$ is known, we have to check for non-empty intersections of all $p-q$ combinations of the measurement update sets $\hat{\mathcal{Z}}_{k}^{i}$, which totals to $\binom{p}{p-q}$ checks. Its union then forms our estimated set $\hat{\mathcal{X}}_{k}$ as depicted in the bottom box of Fig.~\ref{fig:cartoon_algo_1}.

\begin{figure}[!th]
    \centering
    \includegraphics[width=0.475\textwidth]{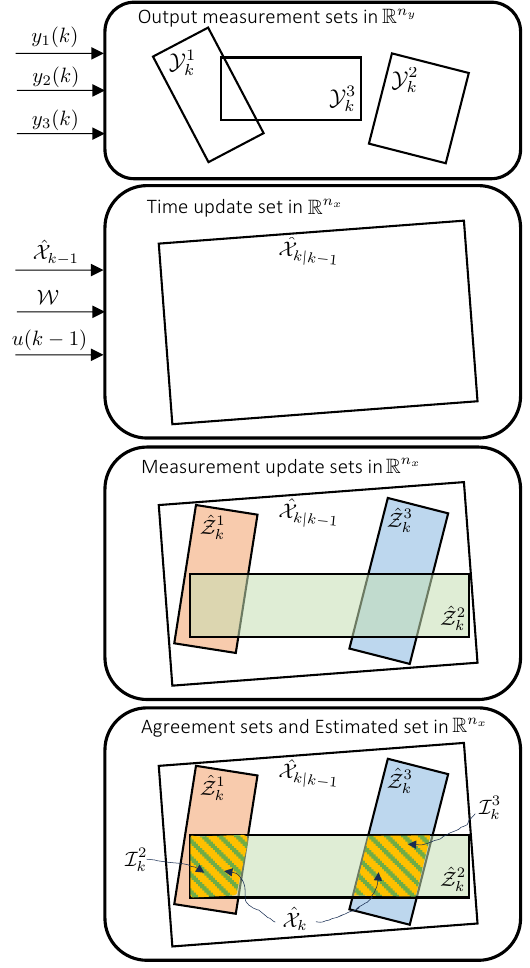}
    \caption{Illustration of Algorithm 1 with $p=3$ and $q=1$. The algorithm deduces that either sensor $1$ or $3$ has been compromised since their corresponding measurement update sets $\hat{\mathcal{Z}}^{1}_{k}$ and $\hat{\mathcal{Z}}^{3}_{k}$ do not intersect.}
    \label{fig:cartoon_algo_1}
\end{figure}

To further elucidate Algorithm \ref{alg:s3e}, we consider a simple numerical example. Simulations for more realistic systems are provided later in Section \ref{sec:evaluation}. 
\begin{ex} \label{example:running}
Consider the following system
\begin{eqnarray*} 
    x(k+1) & = & 0.1 x(k) + w(k), \nonumber \\ 
    y_i(k) & = &x(k)+v_i(k)+a_i(k), \qquad i\in\{1,2,3\}
\end{eqnarray*}
where for all $k\in\mathbb{Z}_{\geq 0}$, the  disturbance and noise are $w(k)\in[-10,10]$ and $v_i(k)\in[-0.1,0.1]$, respectively. Notice that this system is observable via each sensor~$i$. For the purposes of this illustration, suppose $x(0)=0$, $\hat{\mathcal{X}}_0 = [-100,100]$ and the disturbance and noise at time step $k=1$ are $w(1)=4$, $v_1(1)=0.03$, $v_2(1)=-0.02$, $v_3(1)=-0.07$, respectively. Notice also that $x(1)=4$.

Suppose sensor $1$ and $2$ are under attack and sensor $3$ is uncompromised, i.e.,  $p=3$, $q=2$, $c_{\mathsf{J}}=1$, the parameter of Algorithm \ref{alg:s3e} is $n_{\mathsf{J}}=3$, and $a_3(k)=0$ for all $k\in\mathbb{Z}_{\geq 0}$. Further, suppose the attacks on sensors $1$ and $2$ at time step $k=1$ are $a_1(1)=-10.03$ and $a_2(1)=-10$. We construct the following sets according to Algorithm \ref{alg:s3e} for time step $k=1$: 
\begin{itemize}
    \item The measurement output set for each sensor according to \eqref{eq:output_sets} is $\mathcal{Y}^{1}_{k}=[-6.1,-5.9]$, $\mathcal{Y}^{2}_{k}=[-6.12,-5.92]$ and $\mathcal{Y}^{3}_{k}=[3.83,4.03]$, respectively.
    \item The time update set according to \eqref{eq:time_update}: $\hat{\mathcal{X}}_{1|0}= 0.1 \hat{\mathcal{X}}_0 \oplus \mathcal{W} = [-10,10] \oplus [-10,10] = [-20,20]$. 
    \item Measurement update sets according to \eqref{eq:meas_update_i}:  $\hat{\mathcal{Z}}^{1}_{1} = [-6.1,-5.9]$, $\hat{\mathcal{Z}}^{2}_{1} = [-6.12,-5.92]$, $\hat{\mathcal{Z}}^{3}_{1}=[3.83,4.03]$.
    \item Agreement sets using \eqref{eq:agreement_protocol}: for $\mathsf{J}_1=\{1\}$, $\mathsf{J}_2=\{2\}$ and $\mathsf{J}_3=\{3\}$, we get $\mathcal{I}^{1}_{1}= \hat{\mathcal{Z}}_{1} = [-6.1,-5.9]$, $\mathcal{I}^{2}_{1}= \hat{\mathcal{Z}}^{2}_1 = [-6.12,-5.92]$, and $\mathcal{I}^{3}_{1}= \hat{\mathcal{Z}}^{3}_1 = [3.83,4.03]$. 
    \item Finally, the estimated set using \eqref{eq:estimated_set} is $\hat{\mathcal{X}}_{1}=\mathcal{I}^{1}_{1} \cup \mathcal{I}^{2}_{1} \cup \mathcal{I}^{3}_{1} = [-6.12, -5.9] \cup [3.83, 4.03]$.
\end{itemize}
\end{ex}

By Theorem \ref{thm:estimated-inclusion}, the true state $x(1)$ at time step $k=1$ can lie either in the interval $[-6.12,-5.9]$ or $[3.83,4.03]$. The disparity between the two estimated intervals highlights the limitation of the S3E algorithm: is the true state contained in one interval or the other?  Nonetheless, having multiple sets of reasonable sizes that can potentially contain the true state provides utility in safety-critical applications, where the estimated set verifies whether the system's trajectory could have entered regions of the state space designated as unsafe.  

\subsection{S3E Algorithm vs. Point-based Methods}
\label{sec:set-vs-point-based}
Point-based secure state estimators provide a point-based estimate $\hat{x}(k)\in \mathbb{R}^{n_x}$. The security guarantee for these estimators against sensor attacks is an estimation error bound that only depends on the noise and is independent of the attack signal; see \cite{chong2015, pajic2016, shoukry2017, kim2018, mitra2019, chong2020, lee2020, he2021}. However, for providing such a security guarantee, the fundamental assumption is that the number of attacked sensors is strictly less than half of the total number of sensors. The estimation error bound for point-based estimators has the following form
\begin{equation}
\label{eq:errorbound-pointbased}
    \|\hat{x}(k) - x(k) \| \leq \beta(\|\hat{x}(0)-x(0)\|, k) + \gamma_1(\|w\|_{\ell^\infty})  + \gamma_2(\|v\|_{\ell^\infty}) + \gamma_3(\|a\|_{\ell^\infty}) \mathbbm{1}_{q\geq \frac{p}{2}}
\end{equation}
where $\beta$ is a $\mathcal{KL}$ function, $\gamma_1,\gamma_2,\gamma_3$ are $\mathcal{K}_\infty$ functions\footnote{A continuous function $\gamma:\mathbb{R}_{\geq 0}\to\mathbb{R}_{\geq 0}$ is a class $\mathcal{K}$ function, if it is strictly increasing and $\gamma(0)=0$. It is a class $\mathcal{K}_\infty$ function if it is also unbounded. A continuous function $\beta:\mathbb{R}_{\geq0}\times \mathbb{R}_{\geq 0} \to \mathbb{R}_{\geq 0}$ is a class $\mathcal{KL}$ function, if: (i) $\beta(.,s)$ is a class $\mathcal{K}$ function for any $s\geq 0$; (ii) $\beta(r,.)$ is non-increasing and (iii) $\beta(r,s)\to 0$ as $s\to \infty$ for any $r\geq 0$.}, and $\mathbbm{1}_{q\geq \frac{p}{2}}$ is the indicator function, which is $1$ when $q\geq \frac{p}{2}$ and $0$ otherwise. Here, $v=[\arraycolsep=2pt\begin{array}{ccc} v_1^\top & \dots & v_p^\top \end{array}]^\top$ and $a=[\arraycolsep=2pt\begin{array}{ccc} a_1^\top & \dots & a_p^\top \end{array}]^\top$ concatenate the measurement noise and attack signals in a column vector, respectively.

It is easy to see that if the number of sensors $p$ is not strictly twice the number of compromised sensors $q$ (i.e., the assumption of $q<p/2$ is violated), the estimation error bound depends on the magnitude of the attack signal. In this case, the attacker can arbitrarily degrade the accuracy of point-based estimators by injecting attack signals of arbitrarily large magnitude. On the other hand, our set-based approach avoids this situation because it restricts the attack space by allowing only small-magnitude signals to go undetected. Large attack signals are automatically detected and discarded by either the measurement update \eqref{eq:meas_update_i} or the agreement protocol \eqref{eq:agreement_protocol}.

Another important advantage of our set-based approach is that the estimated set $\hat{\mathcal{X}}_k$ contains a collection of agreement sets, which serve as multiple hypotheses for the true state $x(k)$ in the state space\footnote{Although beyond the scope of this paper, an interesting research direction is developing a hypothesis testing technique to figure out the true set from the sets influenced by the attack signals.}.  
Point-based estimators, on the other hand, have an estimation error bound \eqref{eq:errorbound-pointbased} that indicates a single ball inside which the true state might lie. In the presence of attacks on more than half the number of sensors, the size of this ball depends on the magnitude of the attacks. Thus, as described in Example~\ref{example:running}, our set-based method can at least identify plausible regions in the state space containing the true state. Moreover, even when less than half the number of sensors are compromised, bounds like \eqref{eq:errorbound-pointbased} are very conservative in practice because the comparison functions $\beta\in\mathcal{KL}$ and $\gamma\in\mathcal{K}_\infty$ are hard to estimate. 

On the other hand, our set-based SSE guarantees that the true state $x(k)$ of system~\eqref{eq:system} lies in at least one of the agreement sets in the estimated collection of sets $\hat{\mathcal{X}}_k$ at each $k\in\mathbb{Z}_{\geq 0}$ (see  Theorem~\ref{thm:estimated-inclusion}), it must also lie in a zonotope that overbounds $\hat{\mathcal{X}}_k$. That is, let $\hat{\mathcal{X}}_k$ be overbounded by a zonotope 
\begin{equation}
\label{eq:overbounding_zono}
\bar{\mathcal{X}}_k=\zono{\hat{c}_x(k),\hat{G}_x(k)}
\end{equation}
which is obtained by solving
\begin{equation} \label{eq:zono_inclusion}
\min \text{rad}(\bar{\mathcal{X}}_k) ~\text{subject to}~ \hat{\mathcal{X}}_k \subseteq \bar{\mathcal{X}}_k.
\end{equation}
Then, considering $\hat{c}_x(k)$ to be an estimate of $x(k)$, the estimation error can be bounded by
\[
\|\hat{c}_x(k) - x(k)\| \leq \text{rad}(\bar{\mathcal{X}}_k).
\]
We remark that this error bound is significantly smaller in practice than the error bounds obtained by point-based secure estimators.
To illustrate this point, we revisit Example~\ref{example:running} for the case where fewer than half the number of sensors are attacked.

\begin{ex} \label{ex:compare}
    Reconsider Example~\ref{example:running}, but suppose sensor~$1$ is under attack while sensors~$2$ and $3$ are uncompromised. We have the number of sensors $p=3$ and the number of attacked sensors $q=1$, and, for the sake of this illustration, assume the attack signals $a_2(k)=a_3(k)=0$, for all $k\in\mathbb{Z}_{\geq 0}$. Furthermore, suppose the attack on sensor~$1$ is $a_1(1)=-10.03$. 
    We obtain the estimated set according to Algorithm~\ref{alg:s3e} at time step $k=1$ as follows: 
    \begin{itemize}
    \item The measurement output set for each sensor is $\mathcal{Y}^{1}_{k}=[-6.1,-5.9]$ and $\mathcal{Y}^{3}_{k}=[3.83,4.03]$ as calculated before, and $\mathcal{Y}^{2}_{k}=[3.88,4.08]$ according to \eqref{eq:output_sets}.
    \item The time update set $\hat{\mathcal{X}}_{1|0}= [-20,20]$.
    \item Measurement update sets according to \eqref{eq:meas_update_i}:  $\hat{\mathcal{Z}}^{1}_{1}= [-6.1,-5.9]$, $\hat{\mathcal{Z}}^{2}_{1} = [3.88,4.08]$, and 
    $\hat{\mathcal{Z}}^{3}_{1}=[3.83,4.03]$.
    \item Agreement sets using \eqref{eq:agreement_protocol}: for $\mathsf{J}_1=\{1,2\}$, $\mathsf{J}_2=\{1,3\}$ and $\mathsf{J}_3=\{2,3\}$, we get $\mathcal{I}^{1}_{1}= \hat{\mathcal{Z}}^{1}_{1} \cap \hat{\mathcal{Z}}^{2}_{1} = \emptyset$, $\mathcal{I}^{2}_{1}= \hat{\mathcal{Z}}^{1}_{1} \cap \hat{\mathcal{Z}}^{3}_{1} = \emptyset$, and $\mathcal{I}^{3}_{1}= \hat{\mathcal{Z}}^{2}_{1} \cap \hat{\mathcal{Z}}^{3}_{1} = [3.88,4.03]$.
    \item Finally, the estimated set using \eqref{eq:estimated_set} is $\hat{\mathcal{X}}_{1}=\mathcal{I}^{1}_{1} \cup \mathcal{I}^{2}_{1} \cup \mathcal{I}^{3}_{1} = [3.88, 4.03]$. 
    \end{itemize}
    
    We consider the center $3.955$ of $\hat{\mathcal X}_1$ as an estimate of $x(1)=4$, which gives the estimation error $0.045$. 
    \end{ex}
    
    Under the conditions of Example \ref{ex:compare}, we compare our set-based SSE with two comparable point-based SSE in the literature, namely \cite{chong2015} and \cite{shoukry2018}. The comparison metrics we will employ are (i) estimation accuracy, and (ii) computational complexity. We do not compare attacked sensor detection rates, as they were not addressed by both existing point-based SSEs from \cite{chong2015} and \cite{shoukry2018}.

\subsubsection{Estimation accuracy}
For both of the point-based SSEs, we design each observer in a multi-modal Luenberger observer, for $\mathsf{J}\subset \{1,\dots,N\}$ with $|\mathsf{J}|=p-2q=1$:
\begin{align} \label{eq:point-based-obs-algo-example}
    \hat{x}_{\mathsf{J}}(k+1) & = 0.1 \hat{x}_{J}(k) + L_{\mathsf{J}} \left( \mathsf{1}_{|\mathsf{J}|} \otimes \hat{x}_{\mathsf{J}}(k)  - y_{\mathsf{J}}(k) \right),
\end{align}
with Luenberger gain $L_1 = L_2 = L_3 = -1$. Based on Theorem 6.2 of Shoukry et. al., the estimation error at $k=1$ is $4 \bar{\psi}^2 + 2 \bar{\mu}^2$, where $\bar{\psi}>0$ and $\bar{\mu}>0$  are the bounds on the process disturbance $w$ and measurement noise $v_i$ respectively, i.e., $\|w(k)\|\leq \bar{\psi}$ and $\|v_i(k)\| \leq \bar{\mu}$ for all $k\in\mathbb{Z}$ and $i\in\{1,2,3\}$. In Example \ref{ex:compare}, $\bar{\psi}=20$ and $\bar{\mu}=0.2$. We summarise the computed estimation error for our set-based SSE and the two point-based SSEs in \cite{chong2015} and \cite{shoukry2017} in Table \ref{tab:compare}.

\subsubsection{Computational requirements}
The main computational bottleneck across all the SSE algorithms is the amount of memory required at each sample time. This pertains directly to the length of the vector of variables involved in the algorithm's computation. The memory requirement for our set-based SSE is computed for the worst-case scenario when all the agreement sets $\mathcal{I}^{h}_{k}$ given \eqref{eq:agreement_protocol} are non-empty. The memory computation for the point-based algorithms is exact. Recall that $n_x$ is the dimension of the state vector $x$, and $\xi^h$ and $m^h$ are the number of generators and constraints for each agreement set $\mathcal{I}^{h}_{k}$ at time $k$ in \eqref{eq:agreement_protocol}. Note that the number of generators $\xi^h$ and constraints $m^h$ at each time $k$ can be reduced by a reduction algorithm. The resulting memory requirements for our set-based SSE, compared with point-based SSEs, are shown in Table \ref{tab:compare}.

\begin{table}[h!] 
    \centering
    \begin{tabular}{|p{0.17\linewidth}|p{0.33\linewidth}|p{0.23\linewidth}|p{0.17\linewidth}|}
        \hline 
        & Our Set-based SSE & Point-based SSE from [12] & Point-based SSE using SMT from \cite{shoukry2018}\\ \hline \hline
       Estimation accuracy $\|\hat{x}(1)-x(1)\|$ & $0.045$ & $0.836$ & $1600.08$ \\
        \hline
        Memory requirements & $n_x \binom{p}{p-q}+ \sum_{h=1}^{\binom{p}{p-q}} n_x \xi^h + m^h(\xi^h+1)$        & $n_x  \binom{p}{p-q} + n_x  \binom{p}{p-2q}
              $ & $n_x  \binom{p}{p-q}$ \\
        \hline
    \end{tabular}
    \caption{Comparison of the estimation and memory requirements of our set-based SSE versus comparable point-based SSEs.}
    \label{tab:compare}
\end{table}

\subsection{Strategies to Handle Computational Complexity}
\label{subsec_complexity}
A major drawback of Algorithm~\ref{alg:s3e} is that the attacker can design intelligent and stealthy attacks such that the number of agreement sets in the estimate $\hat{\mathcal{X}}_k$ keeps growing with time. Although generating such attacks is a difficult problem, as we argue in Section~\ref{sec:attack-detection-sec}, the computational complexity of the algorithm may increase exponentially with time~$k$ in the worst-case scenario. Therefore, it is important to reduce the complexity by adopting several pruning methods that are described below.

An obvious step is to remove any empty sets or subsets of other sets in the estimated set $\hat{\mathcal{X}}_k$. It is also possible to obtain a single overbounding zonotope of $\hat{\mathcal{X}}_k$ as in \eqref{eq:zono_inclusion}, and use it instead of $\hat{\mathcal{X}}_k$ in the next time update. However, a better tradeoff between accuracy and complexity is to not overbound the whole collection but only the intersecting zonotopes in the collection $\hat{\mathcal{X}}_k$. This may not make the cardinality of $\hat{\mathcal{X}}_k$ equal to one, but it reduces it significantly by allowing minimal loss of accuracy.
Another method is to employ zonotope reduction methods \cite{yang2018comparison} to reduce the number of generators in the zonotopes, which are often increased when performing the Minkowski sum operations.

\section{Attack Detection, Identification, and Filtering} \label{sec:attack-detection-sec}

In this section, we provide sufficient conditions that guarantee the security of the S3E algorithm.

\subsection{Attack Detection}

We say that the attack is \textit{detected} if the system operator (defender) knows for certain that at least one sensor is compromised at time~$k$, i.e., $|\mathsf{S}_k|<p$. While attack detection does not necessarily reveal the specific compromised sensor(s), it allows the system operator to raise the alarm and take necessary steps to secure the system. We provide a sufficient condition for attack detection below.

\begin{prop}
    \label{prop:detection-multisets}
    Consider the agreement sets $\mathcal{I}_k^h$, for $h\in \mathbb{Z}_{[1,n_\mathsf{J}]}$, given by \eqref{eq:agreement_protocol}. Then, if
    \begin{equation}
        \label{eq:attack-detection-suff-cond}
        \bigcap_{h\in \mathbb{Z}_{[1,n_\mathsf{J}]}} \mathcal{I}_k^h = \emptyset
    \end{equation}
    there is at least one sensor that is compromised at time~$k$.
\end{prop}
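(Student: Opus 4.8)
The plan is to argue by contraposition: instead of showing that an empty intersection forces a compromised sensor, I will show that if \emph{no} sensor is compromised at time~$k$, then $\bigcap_{h\in\mathbb{Z}_{[1,n_\mathsf{J}]}}\mathcal{I}_k^h$ is nonempty (indeed it contains the true state $x(k)$). This is essentially a corollary of the reasoning already used in the proof of Theorem~\ref{thm:safe-agreement}, so no new machinery is needed.

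First I would record the standing inclusion $x(k)\in\hat{\mathcal{X}}_{k|k-1}$, which is available from the induction in the proof of Theorem~\ref{thm:estimated-inclusion} (that induction shows $x(k-1)\in\hat{\mathcal{X}}_{k-1}$ implies $x(k)\in\hat{\mathcal{X}}_{k|k-1}$ for all $k\in\mathbb{Z}_{\geq 1}$). Next, suppose toward the contrapositive that $\mathsf{S}_k=\mathbb{Z}_{[1,p]}$, i.e., $a_i(k)=0_{m_i}$ for every $i\in\mathbb{Z}_{[1,p]}$. Then the first part of the proof of Theorem~\ref{thm:safe-agreement} applies verbatim to \emph{every} sensor $i$: since $y_i(k)=C_ix(k)+v_i(k)$ and $\|v_i(k)-c_{v_i}\|_\infty\leq\rad(\mathcal{V}_i)=\rad(\mathcal{Y}_k^i)$, we get $C_ix(k)\in\mathcal{Y}_k^i$, and combined with $x(k)\in\hat{\mathcal{X}}_{k|k-1}$ this yields $x(k)\in\hat{\mathcal{Z}}_k^i=\hat{\mathcal{X}}_{k|k-1}\cap_{C_i}\mathcal{Y}_k^i$ for each $i\in\mathbb{Z}_{[1,p]}$.

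Then I would conclude: because $x(k)\in\hat{\mathcal{Z}}_k^j$ for every $j$, in particular $x(k)\in\bigcap_{j\in\mathsf{J}_h}\hat{\mathcal{Z}}_k^j=\mathcal{I}_k^h$ for every $h\in\mathbb{Z}_{[1,n_\mathsf{J}]}$, hence $x(k)\in\bigcap_{h\in\mathbb{Z}_{[1,n_\mathsf{J}]}}\mathcal{I}_k^h$ and this intersection is nonempty. Contraposing, if $\bigcap_{h}\mathcal{I}_k^h=\emptyset$ then some sensor must satisfy $a_i(k)\neq 0_{m_i}$, i.e., at least one sensor is compromised at time~$k$, which is the claim.

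There is no real obstacle here — the only points requiring care are (i) citing the correct earlier result for $x(k)\in\hat{\mathcal{X}}_{k|k-1}$ rather than assuming it, and (ii) emphasizing (as the surrounding text already does) that the condition \eqref{eq:attack-detection-suff-cond} is only \emph{sufficient}: a stealthy attacker can keep each $\mathcal{I}_k^h$, and even their common intersection, nonempty, so a nonempty $\bigcap_h\mathcal{I}_k^h$ does not certify the absence of attacks. I would include a one-line remark to that effect so the statement is not misread as a characterization.
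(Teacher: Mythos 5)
Your proposal is correct and follows essentially the same route as the paper: both argue by contraposition, showing that if no sensor is compromised then the true state $x(k)$ lies in every measurement update $\hat{\mathcal{Z}}_k^i$ (by the argument from the proof of Theorem~\ref{thm:safe-agreement}) and hence in every agreement set $\mathcal{I}_k^h$, so the common intersection is nonempty. Your version is slightly more careful in explicitly sourcing the inclusion $x(k)\in\hat{\mathcal{X}}_{k|k-1}$ from the induction in Theorem~\ref{thm:estimated-inclusion}, which the paper leaves implicit.
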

\begin{proof}
    We prove the proposition by showing its logical equivalence: `no sensor is compromised' implies `\eqref{eq:attack-detection-suff-cond} does not hold.' 
    Since for every $h\in\mathbb{Z}_{[1,n_\mathsf{J}]}$, the system is redundantly observable through every subset $\mathsf{J}_h$ of sensors with $|\mathsf{J}|=c_\mathsf{J}$.
    If no sensor is compromised, then, by Theorem~\ref{thm:safe-agreement}, we have that, for every $h\in\mathbb{Z}_{[1,n_\mathsf{J}]}$, the agreement set $\mathcal{I}_k^h$ contains the true state $x(k)$. Therefore, $\mathcal{I}_k^1 \cap \dots \cap \mathcal{I}_k^{n_\mathsf{J}} \neq \emptyset$.
\end{proof}

The above proposition shows that whenever the estimated set contains more than one disjoint agreement set, there must be at least one compromised sensor. 
Once the attack is detected, the system operator can take necessary measures to secure the system and bring it to safety.
If the attacker wants to stay stealthy, they must design attack signals that ensure that all the agreement sets yield a non-empty intersection. In the next subsections, we show that the magnitude of such a stealthy attack has to be sufficiently small.

\subsection{Attack Identification}

We say that the attack $a_i(k)$ is \textit{identified} if the system operator knows for certain that sensor~$i$ is compromised at time~$k$, i.e., $i\notin \mathsf{S}_k$. This attack identification allows us to automatically discard measurements from the compromised sensors in the S3E algorithm. We achieve attack identification through a measurement update~\eqref{eq:meas_update_i}, which involves a generalized intersection between the time update $\hat{\mathcal{X}}_{k|k-1}$ and the output measurement set $\mathcal{Y}_k^i$. 

If sensor~$i$ is attacked at time~$k$, we derive a sufficient condition on the magnitude of the attack signal $a_i(k)$ that would yield an empty measurement update set $\hat{\mathcal{Z}}_k^i$. This automatically identifies the attack on sensor~$i$ and discards its effect from the estimation algorithm.
In the next theorem, we use the notation $\bar{\mathcal{X}}_{k|k-1}$ for representing the overbounding zonotope of the time update $\hat{\mathcal{X}}_{k|k-1}$, i.e., $\hat{\mathcal{X}}_{k|k-1} \subseteq \bar{\mathcal{X}}_{k|k-1}$, where
\[
\bar{\mathcal{X}}_{k|k-1}\coloneq \zono{c_x(k|k-1), G_x(k|k-1)}.
\]

\begin{thm}
    \label{thm:attack-detect-meas-update}
    Suppose sensor~$i$ is attacked at time~$k$. Then, if
    \begin{equation}
    \label{eq:detection-meas-update}
         \|a_i(k)\| > \sqrt{m_i} [\rad(C_i\bar{\mathcal{X}}_{k|k-1}) + \rad(\mathcal{V}_i)] - \| C_i [x(k) - c_x(k|k-1)] + v_i(k) - c_{v_i}\|
    \end{equation}
    then the measurement update $\hat{\mathcal{Z}}_k^i = \emptyset$. Recall that $m_i$ is the dimension of the output $y_i$ of system \eqref{eq:system}.
\end{thm}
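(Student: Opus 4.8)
The plan is to show that under the stated magnitude condition, the output measurement set $\mathcal{Y}_k^i$ is disjoint from $C_i \hat{\mathcal{X}}_{k|k-1}$, which by the definition of generalized intersection \eqref{eq:meas_update_i} forces $\hat{\mathcal{Z}}_k^i = \emptyset$. The key observation is that $\hat{\mathcal{Z}}_k^i \neq \emptyset$ would require some $x \in \hat{\mathcal{X}}_{k|k-1}$ with $C_i x \in \mathcal{Y}_k^i$; in particular the center $c_{y_i(k)} = y_i(k) - c_{v_i} = C_i x(k) + v_i(k) + a_i(k) - c_{v_i}$ of $\mathcal{Y}_k^i$ would have to be within distance $\sqrt{m_i}\,\rad(\mathcal{Y}_k^i) = \sqrt{m_i}\,\rad(\mathcal{V}_i)$ of the set $C_i\hat{\mathcal{X}}_{k|k-1} \subseteq C_i\bar{\mathcal{X}}_{k|k-1}$, using the norm bound $\|c_z - p\| \le \sqrt{m_i}\,\rad(\cdot)$ from the preliminaries.

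First I would write, for any $x \in \hat{\mathcal{X}}_{k|k-1} \subseteq \bar{\mathcal{X}}_{k|k-1}$, the decomposition $C_i x - c_{y_i(k)} = C_i(x - c_x(k|k-1)) - \big(C_i(x(k) - c_x(k|k-1)) + v_i(k) + a_i(k) - c_{v_i}\big)$. Next I would bound $\|C_i x - c_{y_i(k)}\|$ from below by the reverse triangle inequality: it is at least $\|C_i(x(k) - c_x(k|k-1)) + v_i(k) - c_{v_i} + a_i(k)\| - \|C_i(x - c_x(k|k-1))\|$, and then at least $\|a_i(k)\| - \|C_i(x(k) - c_x(k|k-1)) + v_i(k) - c_{v_i}\| - \sqrt{m_i}\,\rad(C_i\bar{\mathcal{X}}_{k|k-1})$, since $C_i x \in C_i\bar{\mathcal{X}}_{k|k-1}$ has center $C_i c_x(k|k-1)$ and hence $\|C_i(x - c_x(k|k-1))\| \le \sqrt{m_i}\,\rad(C_i\bar{\mathcal{X}}_{k|k-1})$. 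Under hypothesis \eqref{eq:detection-meas-update}, this lower bound strictly exceeds $\sqrt{m_i}\,\rad(\mathcal{V}_i)$. On the other hand, if there were a point $z \in C_i\hat{\mathcal{X}}_{k|k-1} \cap \mathcal{Y}_k^i$, then $z \in \mathcal{Y}_k^i$ gives $\|z - c_{y_i(k)}\| \le \sqrt{m_i}\,\rad(\mathcal{Y}_k^i) = \sqrt{m_i}\,\rad(\mathcal{V}_i)$, contradicting the lower bound since $z = C_i x$ for some $x \in \hat{\mathcal{X}}_{k|k-1}$. Hence no such $z$ exists, the generalized intersection is empty, and $\hat{\mathcal{Z}}_k^i = \emptyset$.

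The main obstacle I anticipate is bookkeeping the centers carefully: $\mathcal{Y}_k^i$ is centered at $y_i(k) - c_{v_i}$ (not at $C_i x(k)$), and $C_i\bar{\mathcal{X}}_{k|k-1}$ is centered at $C_i c_x(k|k-1)$, so the cross term $C_i(x(k) - c_x(k|k-1)) + v_i(k) - c_{v_i}$ must be isolated exactly as it appears in \eqref{eq:detection-meas-update} rather than being crudely bounded away. A secondary subtlety is the use of $\sqrt{m_i}\,\rad(\cdot)$ as the worst-case Euclidean radius of a zonotope in $\mathbb{R}^{m_i}$ — this relies precisely on the inequality $\|c_z - p\| \le \sqrt{m_i}\,\rad(\mathcal{Z})$ stated in the preliminaries, applied both to $C_i\bar{\mathcal{X}}_{k|k-1}$ and to $\mathcal{Y}_k^i$. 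Once these are tracked correctly, the argument is a direct reverse-triangle-inequality estimate.
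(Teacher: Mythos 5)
Your skeleton is the right one, but the final substitution does not close, and this is a genuine gap. Write $T:=\|C_i[x(k)-c_x(k|k-1)]+v_i(k)-c_{v_i}\|$. Your chain of reverse triangle inequalities correctly yields, for every $x\in\bar{\mathcal{X}}_{k|k-1}$,
\[
\|C_i x - c_{y_i(k)}\| \;\geq\; \|a_i(k)\| \;-\; T \;-\; \sqrt{m_i}\,\rad(C_i\bar{\mathcal{X}}_{k|k-1}).
\]
But hypothesis \eqref{eq:detection-meas-update} reads $\|a_i(k)\| > \sqrt{m_i}[\rad(C_i\bar{\mathcal{X}}_{k|k-1})+\rad(\mathcal{V}_i)] - T$, so plugging it into your lower bound gives only
\[
\|C_i x - c_{y_i(k)}\| \;>\; \sqrt{m_i}\,\rad(\mathcal{V}_i) \;-\; 2T,
\]
not $\sqrt{m_i}\,\rad(\mathcal{V}_i)$ as you assert: the two occurrences of $T$ accumulate instead of cancelling. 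Your argument closes only if the hypothesis carried a plus sign, $\|a_i(k)\| > \sqrt{m_i}[\rad(C_i\bar{\mathcal{X}}_{k|k-1})+\rad(\mathcal{V}_i)] + T$. With the minus sign as stated, no argument of this type can succeed: the attack $a_i(k) = C_i[c_x(k|k-1)-x(k)] - v_i(k) + c_{v_i}$ has norm exactly $T$, recentres $\mathcal{Y}_k^i$ on $C_i c_x(k|k-1)$ (so the two sets generically overlap), and yet satisfies \eqref{eq:detection-meas-update} whenever $2T$ exceeds the bracketed radius term.

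For what it is worth, your route — lower-bounding the distance from every point of $C_i\bar{\mathcal{X}}_{k|k-1}$ to the center of $\mathcal{Y}_k^i$ and comparing against $\sqrt{m_i}\,\rad(\mathcal{Y}_k^i)$ — is logically sounder than the paper's own combination step, which reduces emptiness to the center-to-center distance exceeding $\sqrt{m_i}[\rad(C_i\bar{\mathcal{X}}_{k|k-1})+\rad(\mathcal{Y}_k^i)]$ and then invokes the triangle inequality in the direction that \emph{upper}-bounds that distance by $T+\|a_i(k)\|$; an upper bound cannot certify the required separation either. So the defect you should flag is not in your method but in the sign of the $T$ term in \eqref{eq:detection-meas-update}: with $+T$ your proposal is a complete and correct proof, and it is consistent with the deterministic bound derived in the discussion following the theorem only after the same sign correction.
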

\begin{proof}
    Given a zonotope $C_i\bar{\mathcal{X}}_{k|k-1}$ with center $C_i c_x(k|k-1)$ and radius $\rad(C_i \bar{\mathcal{X}}_{k|k-1})$, we have that any point $p\in \mathbb{R}^{m_i}$ is outside the $C_i\bar{\mathcal{X}}_{k|k-1}$ if $\|C_i c_x(k|k-1) - p\| > \sqrt{m_i} \rad(C_i \bar{\mathcal{X}}_{k|k-1})$. Similarly, for $\mathcal{Y}_k^i$ whose center is $C_i x(k) + v_i(k) - c_{v_i} + a_i(k)$ and radius is $\rad(\mathcal{Y}_k^i)$, any $p\in \mathbb{R}^{m_i}$ is outside $\mathcal{Y}_k^i$ if $\|C_i x(k) + v_i(k) - c_{v_i} + a_i(k) - p\| > \sqrt{m_i} \rad(\mathcal{Y}_k^i)$. Here, the square root term $\sqrt{m_i}$ appears because the radius of a zonotope is a maximum norm $\|\cdot\|_\infty$, which relates to the Euclidean norm $\|\cdot\|$ as follows: for any vector $p\in \mathbb{R}^{m_i}$, $\|p\|\leq \sqrt{m_i} \|p\|_\infty$.

    If the distance between the centers of $C_i\bar{\mathcal{X}}_{k|k-1}$ and $\mathcal{Y}_k^i$ is more than $\sqrt{m_i}[\rad(\bar{\mathcal{X}}_{k|k-1}) + \rad(\mathcal{Y}_k^i)]$, then any point $x \in \bar{\mathcal{X}}_{k|k-1}$ will be such that $C_ix \notin \mathcal{Y}_k^i$. Therefore, in this case, the measurement update \eqref{eq:meas_update_i} is empty. In other words, we have $\hat{\mathcal{Z}}_k^i = \emptyset$ if
    \[
    \|C_i c_x(k|k-1) - [C_i x(k) + v_i(k) - c_{v_i} + a_i(k)]\| > \sqrt{m_i}[\rad(\bar{\mathcal{X}}_{k|k-1}) + \rad(\mathcal{Y}_k^i)].
    \]
    Moreover, using the triangle inequality, we obtain
    \[
    \|C_i[x(k) - c_x(k|k-1)] + v_i(k) - c_{v_i}\| + \|a_i(k)\| \geq
    \|C_i c_x(k|k-1) - [C_i x(k) + v_i(k) - c_{v_i} + a_i(k)]\|.
    \]
    Thus, \eqref{eq:detection-meas-update} is established by combining the above two inequalities and, from \eqref{eq:output_sets}, observing that $\rad(\mathcal{Y}_k^i) = \rad(\mathcal{V}_i)$.
\end{proof}

The sufficient condition in Theorem~\ref{thm:attack-detect-meas-update} provides a class of attack signals that can be identified at the measurement update step of the S3E algorithm. If the measurement update set $\hat{\mathcal{Z}}_k^i$ is empty, then it reveals that sensor~$i$ is compromised at time~$k$. 
However, the implication does not hold in the other direction. Nonetheless, while \eqref{eq:detection-meas-update} establishes a threshold above which an attack will lead to an empty generalized intersection, it does not guarantee that smaller attacks will be undetected. The overapproximations used in the proof of Theorem~\ref{thm:attack-detect-meas-update} can introduce conservatism, meaning that attacks slightly below the threshold might still be detectable.
In other words, the measurement update $\hat{\mathcal{Z}}_k^i$ could still be empty even if the attacker violates \eqref{eq:detection-meas-update}. This can happen if the attack signal is such that $C_i\hat{\mathcal{X}}_{k|k-1}$ and $\mathcal{Y}_k^i$ are inconsistent with each other at time~$k$, resulting in an empty generalized intersection. 

Notice that the bound \eqref{eq:detection-meas-update} depends on the specific realization of measurement noise $v_i(k)$ at time~$k$, which is unknown to both the attacker and the defender. One can further lower bound \eqref{eq:detection-meas-update} and obtain a deterministic bound as follows:
\[
\begin{array}{rcl}
    \|a_i(k)\| & > & \sqrt{m_i} [\rad(C_i\bar{\mathcal{X}}_{k|k-1}) + \rad(\mathcal{V}_i)] - \| C_i [x(k) - c_x(k|k-1)] + v_i(k) - c_{v_i}\| \\ [0.5em]
    &\geq & \sqrt{m_i} [\rad(C_i\bar{\mathcal{X}}_{k|k-1}) + \rad(\mathcal{V}_i)] - \| C_i [x(k) - c_x(k|k-1)]\| - \|v_i(k) - c_{v_i}\| \\ [0.5em]
    &\geq & \sqrt{m_i} [\rad(C_i\bar{\mathcal{X}}_{k|k-1}) + \rad(\mathcal{V}_i)] - \| C_i [x(k) - c_x(k|k-1)]\| - \sqrt{m_i} \rad(\mathcal{V}_i) \\ [0.5em]
    &=& \sqrt{m_i} \rad(C_i\bar{\mathcal{X}}_{k|k-1}) - \| C_i [x(k) - c_x(k|k-1)]\|.
\end{array}
\]
If the attacker wants to stay stealthy and effective, it can choose the attack signal $a_i(k)$, for every $i\in\mathbb{Z}_{[1,p]}$, satisfying
\[
\|a_i(k)\| \leq \sqrt{m_i} \rad(C_i\bar{\mathcal{X}}_{k|k-1}) - \| C_i [x(k) - c_x(k|k-1)]\|.
\]
Although this still does not ensure a non-empty $\hat{\mathcal Z}_k^i$, it may serve as a rough guideline for the attacker to generate a stealthy attack signal.

\subsection{Attack Filtering}
We say that the attack $a_i(k)$ is \textit{filtered out} if it is detected and discarded from the agreement protocol. To elucidate, let $i\in \mathsf{J}_h$, for some $h\in\mathbb{Z}_{[1,n_\mathsf{J}]}$. Then, if the attack signal $a_i(k)$ is filtered out, then the corresponding agreement set $\mathcal{I}_k^h$ is empty and does not affect the estimated set $\hat{\mathcal{X}}_k$. Although the attack is filtered out whenever it is identified, filtering does not imply identification in general.

In addition to the measurement update, the attacks could also be filtered out via the intersections in the agreement protocol \eqref{eq:agreement_protocol}, which could yield an empty agreement set $\mathcal{I}_k^h$ in case of inconsistencies between the measurement updates of the sensors in $\mathsf{J}_h$. However, unlike the measurement update, the attacks may not always be identified when some agreement sets are empty. This is because a compromised sensor~$i$ could be a part of multiple agreement sets $\mathcal{I}_k^h$, for $h\in \mathbb{Z}_{[1,n_\mathsf{J}]}$, some of which may turn out to be non-empty. One can identify the attack $a_i(k)$ if, for every $h\in\mathbb{Z}_{[1,n_\mathsf{J}]}$ such that $i\in \mathsf{J}_h$, we have $\mathcal{I}_k^h=\emptyset$.

To present the next proposition, we introduce some notation. Recall that $\mathsf{S}_k\subseteq \mathbb{Z}_{[1,p]}$ is the subset of uncompromised sensors at time~$k$. Similarly, let $\mathsf{A}_k = \mathbb{Z}_{[1,p]} \setminus \mathsf{S}_k$ be the subset of compromised sensors at time~$k$. Define
\[
\mathcal{I}_k^{h,\mathsf{A}_k} = \bigcap_{j\in \mathsf{J}_h \setminus \mathsf{S}_k} \hat{\mathcal{Z}}_k^j
\]
to be the agreement set of attacked sensors in $\mathsf{J}_h$ and let its overbounding zonotope be
\[
\bar{\mathcal{I}}_k^{h,\mathsf{A}_k} = \zono{c_{\mathcal{I}}^{h,\mathsf{A}_k}(k), G_{\mathcal{I}}^{h,\mathsf{A}_k}(k)}.
\]
Similarly, define
\[
\mathcal{I}_k^{h,\mathsf{S}_k} = \bigcap_{j\in \mathsf{J}_h \setminus \mathsf{A}_k} \hat{\mathcal{Z}}_k^j
\]
be the agreement set of uncompromised sensors in $\mathsf{J}_h$ and let its overbounding zonotope be
\[
\bar{\mathcal{I}}_k^{h,\mathsf{S}_k} = \zono{c_{\mathcal{I}}^{h,\mathsf{S}_k}(k), G_{\mathcal{I}}^{h,\mathsf{S}_k}(k)}.
\]

\begin{prop}
    \label{prop:detection-agreementset}
    Let $\mathsf{J}_h\subset\mathbb{Z}_{[1,p]}$ be a subset of sensors with $|\mathsf{J}_h|=c_\mathsf{J}$. Then, if
    \begin{equation}
        \label{eq:attack-detect-agreement-protocol}
        \| c_{\mathcal{I}}^{h,\mathsf{A}_k}(k) \| > \sqrt{n_x} [\rad(\bar{\mathcal{I}}_k^{h,\mathsf{S}_k}) + \rad(\bar{\mathcal{I}}_k^{h,\mathsf{A}_k})] - \|c_{\mathcal{I}}^{h,\mathsf{S}_k}(k)\|
    \end{equation}
    then the agreement set $\mathcal{I}_k^h = \mathcal{I}_k^{h,\mathsf{S}_k} \cap \mathcal{I}_k^{h,\mathsf{A}_k} = \emptyset$. Recall that $n_x$ is the dimension of the state $x(k)$ of system \eqref{eq:system}. 
\end{prop}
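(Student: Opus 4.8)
The plan is to prove this as the agreement-protocol counterpart of Theorem~\ref{thm:attack-detect-meas-update}, with the state space $\mathbb{R}^{n_x}$ replacing the output space $\mathbb{R}^{m_i}$ and the sub-agreement zonotopes $\bar{\mathcal{I}}_k^{h,\mathsf{S}_k}$, $\bar{\mathcal{I}}_k^{h,\mathsf{A}_k}$ playing the roles of $C_i\bar{\mathcal{X}}_{k|k-1}$ and $\mathcal{Y}_k^i$. First I would record a purely set-algebraic identity: since $\mathsf{J}_h$ is the disjoint union of $\mathsf{J}_h\setminus\mathsf{A}_k$ and $\mathsf{J}_h\setminus\mathsf{S}_k$, the intersection in \eqref{eq:agreement_protocol} factors as
\[
\mathcal{I}_k^h=\bigcap_{j\in\mathsf{J}_h}\hat{\mathcal{Z}}_k^j=\Big(\bigcap_{j\in\mathsf{J}_h\setminus\mathsf{A}_k}\hat{\mathcal{Z}}_k^j\Big)\cap\Big(\bigcap_{j\in\mathsf{J}_h\setminus\mathsf{S}_k}\hat{\mathcal{Z}}_k^j\Big)=\mathcal{I}_k^{h,\mathsf{S}_k}\cap\mathcal{I}_k^{h,\mathsf{A}_k}.
\]
Since $\mathcal{I}_k^{h,\mathsf{S}_k}\subseteq\bar{\mathcal{I}}_k^{h,\mathsf{S}_k}$ and $\mathcal{I}_k^{h,\mathsf{A}_k}\subseteq\bar{\mathcal{I}}_k^{h,\mathsf{A}_k}$, replacing each factor by its overbounding zonotope can only enlarge the intersection, so it suffices to show that $\bar{\mathcal{I}}_k^{h,\mathsf{S}_k}\cap\bar{\mathcal{I}}_k^{h,\mathsf{A}_k}=\emptyset$.

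Next I would invoke the $\ell_\infty$-to-$\ell_2$ bound recalled in Section~\ref{sec:zonotope}: for any (constrained) zonotope $\mathcal{Z}\subset\mathbb{R}^{n_x}$ with center $c_z$ and any $p\in\mathcal{Z}$ one has $\|c_z-p\|\le\sqrt{n_x}\,\rad(\mathcal{Z})$, so $\mathcal{Z}$ is contained in the Euclidean ball of radius $\sqrt{n_x}\,\rad(\mathcal{Z})$ about $c_z$. Applying this to both $\bar{\mathcal{I}}_k^{h,\mathsf{S}_k}$ and $\bar{\mathcal{I}}_k^{h,\mathsf{A}_k}$ and using the triangle inequality, the two enclosing balls, and hence the two zonotopes, are disjoint whenever
\[
\|c_{\mathcal{I}}^{h,\mathsf{S}_k}(k)-c_{\mathcal{I}}^{h,\mathsf{A}_k}(k)\|>\sqrt{n_x}\,\big[\rad(\bar{\mathcal{I}}_k^{h,\mathsf{S}_k})+\rad(\bar{\mathcal{I}}_k^{h,\mathsf{A}_k})\big].
\]
This geometric separation condition is the state-space analog of the center-distance condition that opens the proof of Theorem~\ref{thm:attack-detect-meas-update}.

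Finally I would lower-bound the distance between the two centers by the triangle inequality in terms of the norms $\|c_{\mathcal{I}}^{h,\mathsf{A}_k}(k)\|$ and $\|c_{\mathcal{I}}^{h,\mathsf{S}_k}(k)\|$ --- the same manipulation that, in the proof of Theorem~\ref{thm:attack-detect-meas-update}, turns the center-distance condition into a lower bound on $\|a_i(k)\|$. Combining that bound with the separation condition above yields exactly the hypothesis \eqref{eq:attack-detect-agreement-protocol}, which therefore forces $\bar{\mathcal{I}}_k^{h,\mathsf{S}_k}\cap\bar{\mathcal{I}}_k^{h,\mathsf{A}_k}=\emptyset$ and, by the first step, $\mathcal{I}_k^h=\emptyset$. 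I do not expect a deep obstacle: the argument is a direct transcription of Theorem~\ref{thm:attack-detect-meas-update} from the measurement space to the state space. The points needing care are (i) the degenerate cases where $\mathsf{J}_h$ lies entirely inside $\mathsf{S}_k$ or entirely inside $\mathsf{A}_k$, in which one of the two sub-agreement sets is an empty intersection equal to all of $\mathbb{R}^{n_x}$ with unbounded overbounding radius, so \eqref{eq:attack-detect-agreement-protocol} cannot hold and the statement is vacuously true there; (ii) retaining the $\sqrt{n_x}$ factor, which appears solely because $\rad(\cdot)$ is a max-norm radius while the centers are separated in the Euclidean norm; and (iii) arranging the triangle-inequality step so that it lands precisely on the form \eqref{eq:attack-detect-agreement-protocol} rather than a slightly different variant.
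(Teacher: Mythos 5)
Your route is exactly the one the paper intends: its published proof of this proposition is a one-line deferral to ``similar lines of argument as the proof of Theorem~\ref{thm:attack-detect-meas-update},'' and your decomposition $\mathcal{I}_k^h=\mathcal{I}_k^{h,\mathsf{S}_k}\cap\mathcal{I}_k^{h,\mathsf{A}_k}$, the passage to the overbounding zonotopes, the $\sqrt{n_x}$ ball enclosure from Section~\ref{sec:zonotope}, and the resulting separation criterion
$\|c_{\mathcal{I}}^{h,\mathsf{S}_k}(k)-c_{\mathcal{I}}^{h,\mathsf{A}_k}(k)\|>\sqrt{n_x}\,[\rad(\bar{\mathcal{I}}_k^{h,\mathsf{S}_k})+\rad(\bar{\mathcal{I}}_k^{h,\mathsf{A}_k})]$
are precisely the intended transcription of that argument into the state space. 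Everything up to and including that separation condition is sound.

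The gap sits in your final step, exactly at the point you flag as needing care in (iii): no arrangement of the triangle inequality lands on \eqref{eq:attack-detect-agreement-protocol}, because the inequality runs the wrong way. The triangle inequality gives $\|c_{\mathcal{I}}^{h,\mathsf{A}_k}(k)-c_{\mathcal{I}}^{h,\mathsf{S}_k}(k)\|\le\|c_{\mathcal{I}}^{h,\mathsf{A}_k}(k)\|+\|c_{\mathcal{I}}^{h,\mathsf{S}_k}(k)\|$, so the hypothesis \eqref{eq:attack-detect-agreement-protocol}, which is equivalent to $\|c_{\mathcal{I}}^{h,\mathsf{A}_k}(k)\|+\|c_{\mathcal{I}}^{h,\mathsf{S}_k}(k)\|>\sqrt{n_x}\,[\rad(\bar{\mathcal{I}}_k^{h,\mathsf{S}_k})+\rad(\bar{\mathcal{I}}_k^{h,\mathsf{A}_k})]$, only \emph{upper}-bounds the center distance and does not force the centers apart: if both centers coincide at a point far from the origin, \eqref{eq:attack-detect-agreement-protocol} holds while $\mathcal{I}_k^h$ can be non-empty. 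What the reverse triangle inequality actually yields as a sufficient condition is $\|c_{\mathcal{I}}^{h,\mathsf{A}_k}(k)\|>\sqrt{n_x}\,[\rad(\bar{\mathcal{I}}_k^{h,\mathsf{S}_k})+\rad(\bar{\mathcal{I}}_k^{h,\mathsf{A}_k})]+\|c_{\mathcal{I}}^{h,\mathsf{S}_k}(k)\|$, i.e., a plus sign where \eqref{eq:attack-detect-agreement-protocol} has a minus (or, more naturally, one should state the condition on $\|c_{\mathcal{I}}^{h,\mathsf{A}_k}(k)-c_{\mathcal{I}}^{h,\mathsf{S}_k}(k)\|$ directly, since the origin plays no distinguished role here). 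To be fair, this is not a defect you introduced: the closing ``combine the two inequalities'' step in the paper's own proof of Theorem~\ref{thm:attack-detect-meas-update} has the same directional problem, and by transcribing it faithfully you have inherited it. Your degenerate-case remark (i) and the $\sqrt{n_x}$ bookkeeping in (ii) are correct.
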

\begin{proof}
    The proof follows from a similar line of argument as the proof of Theorem~\ref{thm:attack-detect-meas-update}.
\end{proof}

The interpretation of the inequality in Proposition~\ref{prop:detection-agreementset} is not as straightforward as the one in Theorem~\ref{thm:attack-detect-meas-update}. If the attacker wants to be effective and remain unfiltered, they must devise an attack strategy that violates \eqref{eq:attack-detect-agreement-protocol} at every time instant. Otherwise, the attack is discarded due to an empty agreement set. To violate \eqref{eq:attack-detect-agreement-protocol}, the attacker must take into account the distance between the intersections of the measurement updates of the compromised and uncompromised sensors.

Notice that $c_{\mathcal{I}}^{h,\mathsf{A}_k}(k)$, which is the center of the overbounding zonotope $\bar{\mathcal{I}}_k^{h,\mathsf{A}_k}$, is actually the Chebyshev center of the agreement set $\mathcal{I}_k^{h,\mathsf{A}_k}$. This agreement set is formed by the intersections of the measurement update sets of the attacked sensors $\mathsf{A}_k$. The centers of these measurement update sets are influenced by the corresponding attack signals $a_j(k)$. Therefore, to achieve $\mathcal{I}_k^h\neq \emptyset$, it is necessary that the attacker coordinates the attacks on sensors in such a way that the condition of Proposition~\ref{prop:detection-agreementset} is violated. 

From the attacker's perspective, the condition of Proposition~\ref{prop:detection-agreementset} is necessary for $\mathcal{I}_k^h\neq \emptyset$, and is not sufficient. This points to several interesting open problems for designing effective attacks. Beyond Theorem~\ref{thm:attack-detect-meas-update} and Propositions~\ref{prop:detection-multisets} and \ref{prop:detection-agreementset}, our future work will investigate whether a sufficient condition could be derived for the attacker to ensure that both the measurement updates and the agreement sets are non-empty. Moreover, the complexity of generating such attacks is also one of the important questions that will be deferred for our future work.

\subsection{Algorithm for Attack Filtering and Identification} \label{sec:attack_detection}
A notable contribution of the set-based state estimation scheme in this paper over other secure schemes is Assumption~\ref{assum:main}(\ref{assum:main_numAttacks}), which allows the attacker to compromise not only up to $p-1$ sensors at each time instant but also different subsets of sensors at different times. To this end, we remark that we can detect only those compromised sensors that are injected with non-stealthy attack signals, whereby non-stealthiness means the attack signals that satisfy the conditions of Theorem~\ref{thm:attack-detect-meas-update} and Propositions~\ref{prop:detection-multisets} and \ref{prop:detection-agreementset}. The attack identification algorithm is fairly simple and can be summarized as in Algorithm~\ref{alg:detection}.

\begin{algorithm}[h]
\caption{Attack filtering and identification}
\label{alg:detection}
\begin{algorithmic}[1]
\Require Measurement updates $\hat{\mathcal{Z}}_k^i$ for $i\in\mathbb{Z}_{[1,p]}$, sensor subsets $\mathsf{J}_h$, for $h\in\mathbb{Z}_{[1,n_\mathsf{J}]}$, and their corresponding agreement sets $\mathcal{I}_k^h$.

\State Initialize $\hat{\mathsf{S}}_k = \emptyset$

\State $\hat{\mathsf{A}}_k \leftarrow \{i\in\mathbb{Z}_{[1,p]} : \hat{\mathcal{Z}}_k^i = \emptyset\}$

\For{$h=1,\dots,n_\mathsf{J}$}

\If{$\mathcal{I}_k^h\neq\emptyset$}

\State Estimated safe subset $\hat{\mathsf{S}}_k \leftarrow \hat{\mathsf{S}}_k \cup \mathsf{J}_h$

\EndIf

\EndFor

\State Identified attacked sensors $\hat{\mathsf{A}}_k \leftarrow \hat{\mathsf{A}}_k \cup \{\mathbb{Z}_{[1,p]}\setminus \hat{\mathsf{S}}_k\}$.
\end{algorithmic}
\end{algorithm}

At time $k$, Algorithm~\ref{alg:detection} first identifies some of the attacked sensors by checking whether the corresponding measurement update is empty. Then, for each $h\in\mathbb{Z}_{[1,n_\mathsf{J}]}$, if the agreement set $\mathcal{I}_k^h$ is non-empty, then the sensors with indices in $\mathsf{J}_h$ are either safe or compromised with a stealthy attack signal. Otherwise, the set $\mathsf{J}_h$ contains at least one attacked sensor. By checking all the combinations $\mathsf{J}_h$ and storing a `potentially' safe subset of sensors in $\hat{\mathsf{S}}_k$ at every iteration, a subset of attacked sensors $\hat{\mathsf{A}}_k$ are updated by adding those sensors that are not in $\hat{\mathsf{S}}_k$.

\begin{rem}
    The estimated safe subset $\hat{\mathsf{S}}_k$ contains the true safe subset $\mathsf{S}_k$ at time $k$, i.e., $\mathsf{S}_k\subseteq\hat{\mathsf{S}}_k$. If some sensors are injected with small-valued stealthy attack signals, Algorithm~\ref{alg:detection} may not identify those attacks and consider those sensors to be uncompromised. Therefore, the estimated set of identified attacked sensors $\hat{\mathsf{A}}_k$ is only a subset of the true set of attacked sensors $\mathsf{A}_k = \mathbb{Z}_{[1,p]} \setminus \mathsf{S}_k$.
\end{rem}

\begin{rem}
    In Algorithm~\ref{alg:detection}, we identify a subset of attacked sensors at every time $k$. However, in the time-invariant attack setting where the attacker does not change the subset of sensors to compromise at every time~$k$, Algorithm~\ref{alg:detection} can be adapted to detect and remove the attacked sensors over time cumulatively. This has potential applications in sensor fault detection and isolation, as faults can be considered as naive, time-invariant attacks.
\end{rem}

\section{Numerical Simulation} 
\label{sec:evaluation}
We evaluated our proposed algorithms using two examples. In the first illustrative example, we assume that an attacker can target half of the total number of sensors at every time instant. We illustrate the estimation algorithm and use a strategy to reduce the complexity with a minimal loss of estimation accuracy. In the second example of a three-story building structure, we assume redundant observability from every subset of two sensors. We use the CORA toolbox \cite{althoff2015} to generate our simulations.

\subsection{Two-Dimensional Linear System}

In this example, we consider a simple two-dimensional linear system for illustrative purposes. We consider a zero control input $u(k)\equiv 0$, and the system matrices are given by
\begin{align*}
    A = &\left[\begin{array}{cc}
        1 & 0 \\
        1 & 1
    \end{array}
    \right], \quad
        C_1 = \left[\begin{array}{cc} 1 & 0 \\ 0 & 1\end{array}\right], \quad  C_2 = \left[\begin{array}{cc} 1 & 1 \\ 1 &0 \end{array}\right], \quad
        C_3 = \left[\begin{array}{cc} 0 & 1 \\ 1 & 0 \end{array}\right], \quad  C_4 = \left[\begin{array}{cc} 1 & 2 \\ 2 & 1 \end{array}\right].
\end{align*}
The number of sensors is $p=4$, and we suppose that an attacker can target any combination of $q=2$ sensors at different time instants. In this case, since $q=p/2$, the point-based state estimators requiring $q<p/2$ cannot be employed. 

Let the process noise bound be $\mathcal{W}=\zono{0,\sigma_\mathcal{W} I_2}$ with $\sigma_\mathcal{W} =0.02$
and the measurement noise bounds of the four sensors be
$
    \mathcal{V}_{1} = \mathcal{V}_{2} = \mathcal{V}_{3} = \mathcal{V}_{4} = \zono{0,\sigma_\mathcal{V} I_2}
$ with $\sigma_\mathcal{V} =1$, where $I_2$ is the identity matrix of dimension $2\times 2$. The attack is generated according to the following
\begin{align}
a_i(k) \sim U(-\sigma_\mathcal{V} \phi(k),\sigma_\mathcal{V} \phi(k))
\label{eq:attgen}
\end{align}
where $U$ is a uniform distribution over an interval $(-\sigma_\mathcal{V} \phi(k),\sigma_\mathcal{V} \phi(k))$ and $\phi(k)$ is strictly increasing sequence with $\phi(1)=1$. The index of the two attacked sensors rotates among the available sensor indices.

The time update, measurement update, agreement, and estimated sets can be computed using Algorithm~\ref{alg:s3e}. Fig.~\ref{fig:illustrative4Sensors} illustrates the time update set $\hat{\mathcal{X}}_{k|k-1}$ (green), the measurement sets $\hat{\mathcal{Z}}_k^i$ (pink (unattacked) and red (attacked)), the estimated set $\hat{\mathcal{X}}_{k}$ (black), and the true state $x(k)$ (blue) for different sets of sensors attacked at different times. The attacker chooses a random combination of two sensors to compromise at each time instant. At time~$k=1$, we have in Fig.~\ref{fig:k1_atts2s3} sensors $2$ and $3$ under attack. Then, sensors $3$ and $4$ are attacked in Fig.~\ref{fig:k2_atts3s4} at time step $k=2$. Notice that the true state stays included in the estimated set $\hat{\mathcal{X}}_k$ for every $k$.

The generated attack values lead to an increase in the number of generated sets, as depicted in Fig.~\ref{fig:k3_atts1s4}. To manage this complexity, we employ a reduction technique that involves taking the union of intersecting estimated sets. The impact of this technique is illustrated in Fig.~\ref{fig:illustrative4SensorsReduce}, where we use the same seed for generating random attacks from \eqref{eq:attgen}. This approach is illustrated by comparing Fig.~\ref{fig:k1_atts2s3} with Fig.~\ref{fig:k1_atts2s3B}. The reduced estimated sets are then carried forward to subsequent steps, as shown in Fig.~\ref{fig:k2_atts3s4B} and Fig.~\ref{fig:k3_atts1s4B}. This is why the sets in these figures show slight differences when compared to those in Fig.~\ref{fig:k2_atts3s4} and Fig.~\ref{fig:k3_atts1s4}.

\begin{figure*}[!htbp]
    \centering
    \begin{subfigure}[t]{0.32\textwidth}
     \centering
        \includegraphics[width=\textwidth]{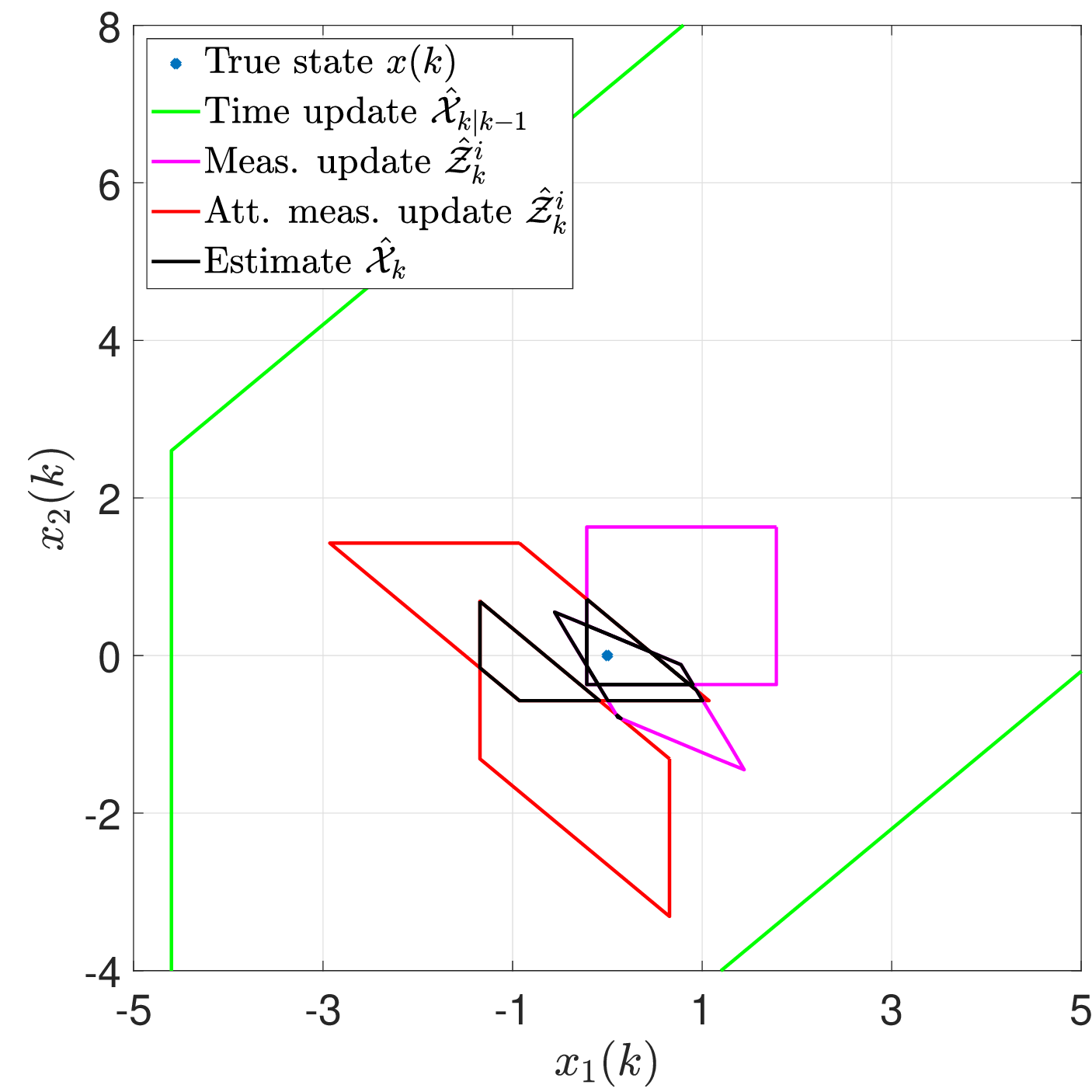}
        \caption{Time $k=1$. Sensors $\{2,3\}$ are attacked.}
        \label{fig:k1_atts2s3}
    \end{subfigure}
    \begin{subfigure}[t]{0.32\textwidth}
     \centering
        \includegraphics[width=\textwidth]{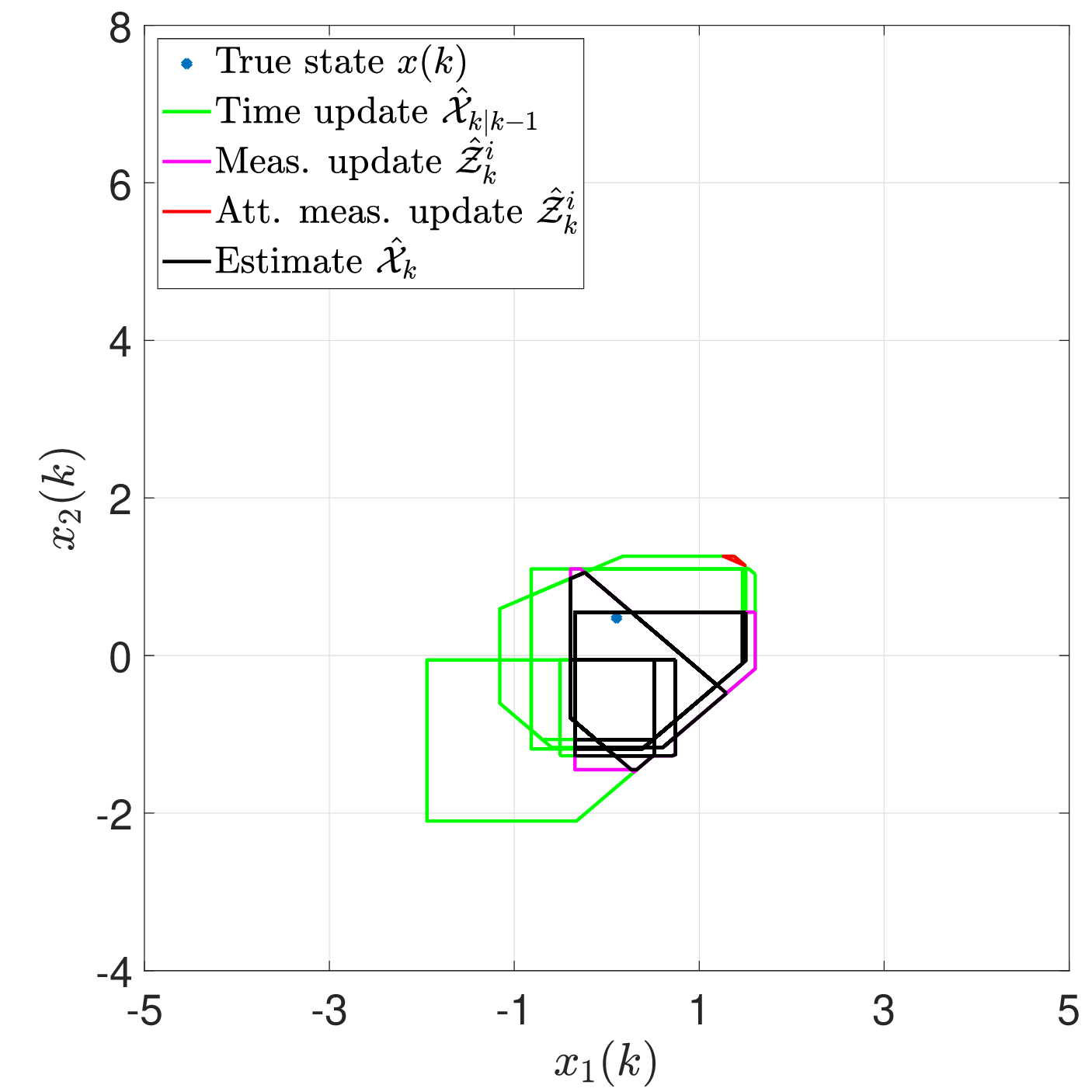}
        \caption{Time $k=2$. Sensors $\{3,4\}$ are attacked.}
        \label{fig:k2_atts3s4}
    \end{subfigure}
    \begin{subfigure}[t]{0.32\textwidth}
     \centering
        \includegraphics[width=\textwidth]{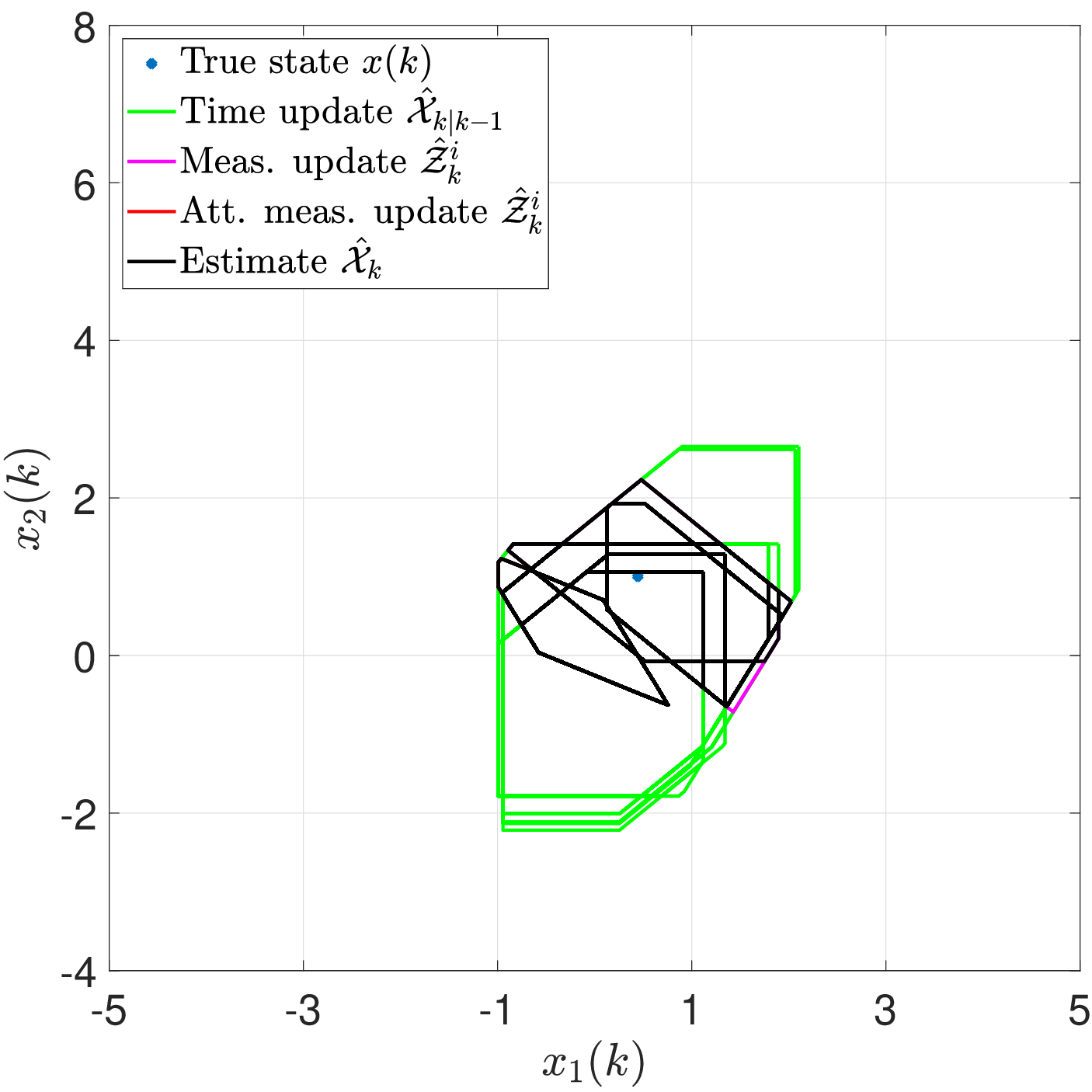}
        \caption{Time $k=3$. Sensors $\{1,4\}$ are attacked.}
        \label{fig:k3_atts1s4}
    \end{subfigure}
\caption{Snapshots of estimated sets using Algorithm~\ref{alg:s3e}.} 
    \label{fig:illustrative4Sensors}
\end{figure*}

\begin{figure*}[!htbp]
    \centering
    \begin{subfigure}[t]{0.32\textwidth}
     \centering
        \includegraphics[width=\textwidth]{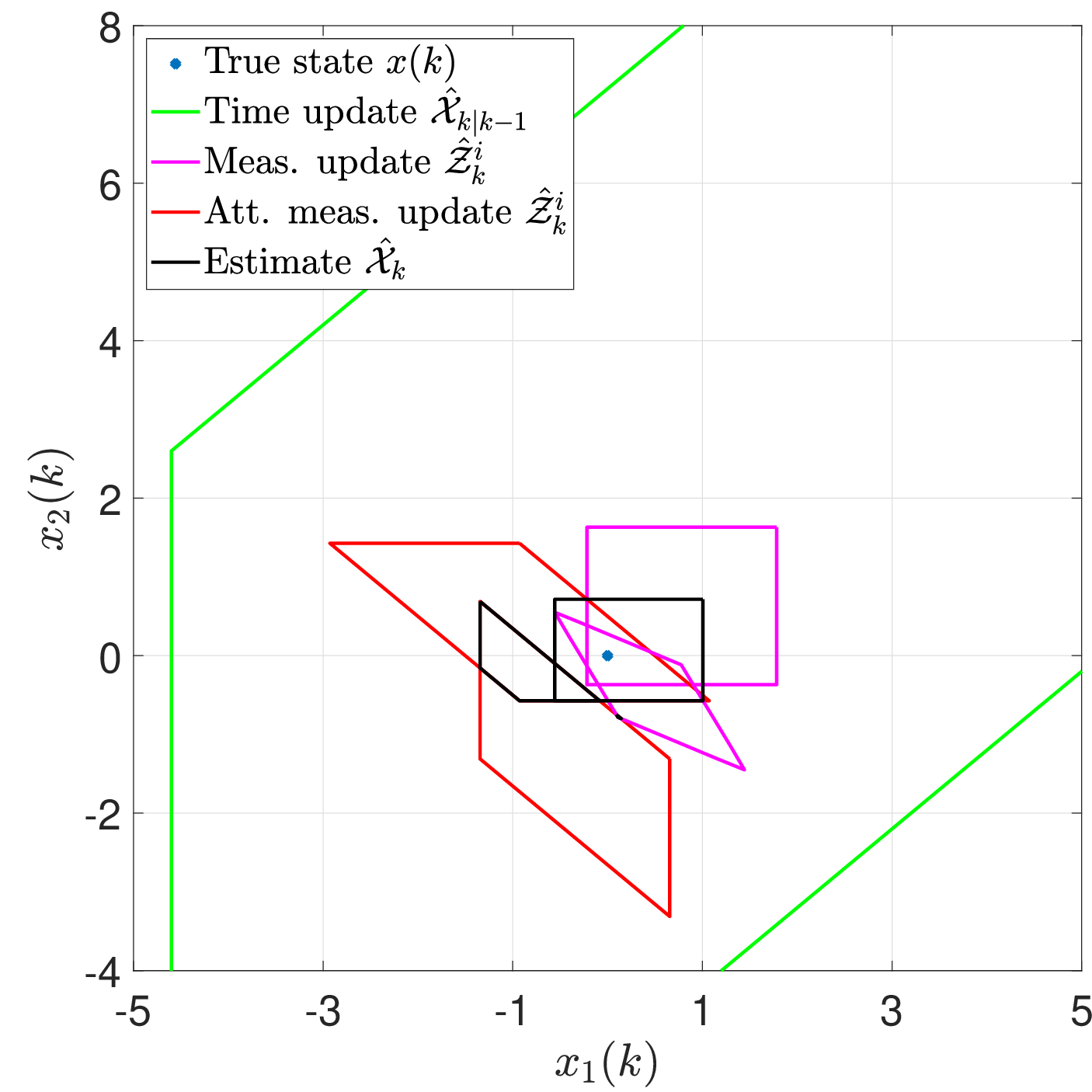}
        \caption{Time $k=1$. Sensors $\{2,3\}$ are attacked.}
        \label{fig:k1_atts2s3B}
    \end{subfigure}
    \begin{subfigure}[t]{0.32\textwidth}
     \centering
        \includegraphics[width=\textwidth]{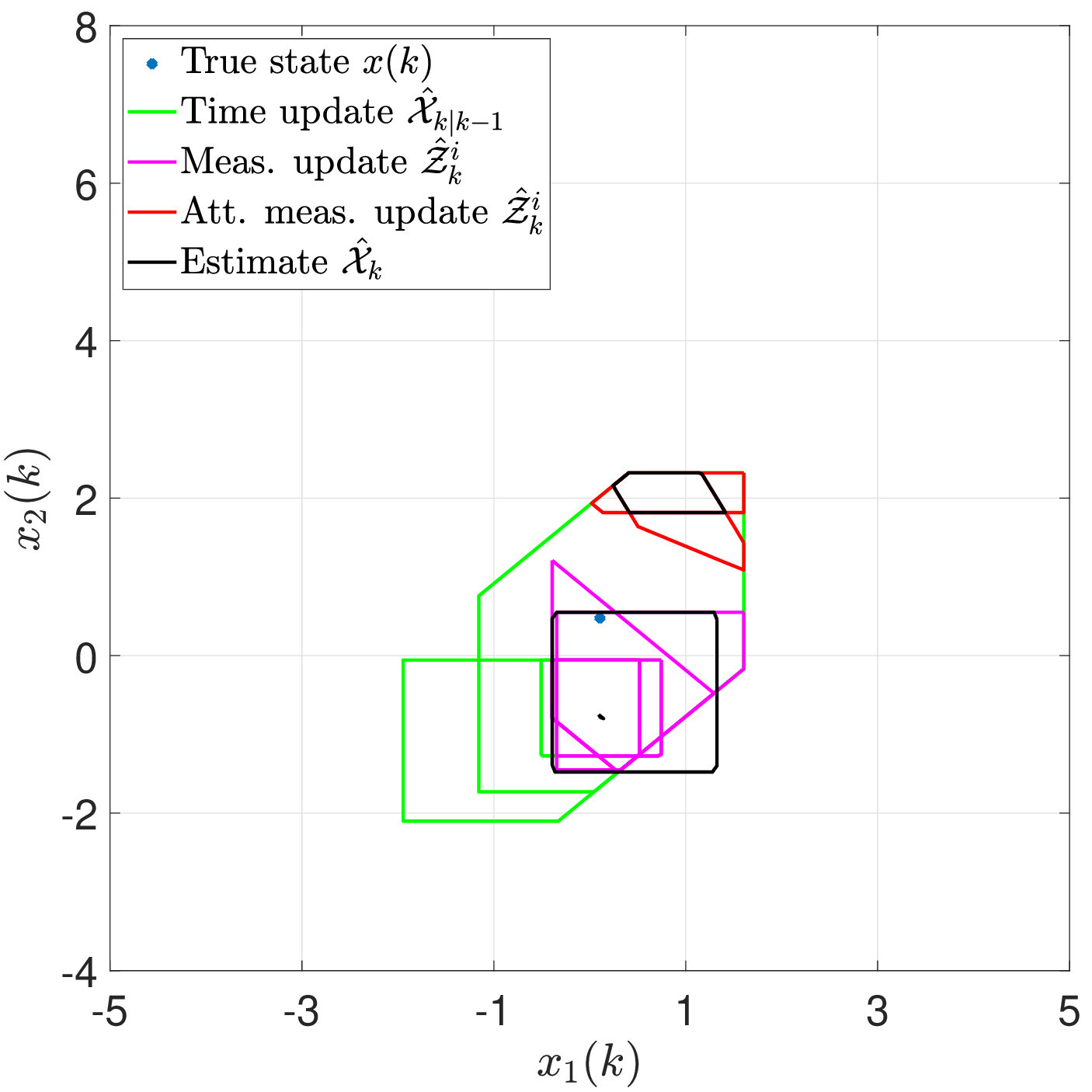}
        \caption{Time $k=2$. Sensors $\{3,4\}$ are attacked.}
        \label{fig:k2_atts3s4B}
    \end{subfigure}
    \begin{subfigure}[t]{0.32\textwidth}
     \centering
        \includegraphics[width=\textwidth]{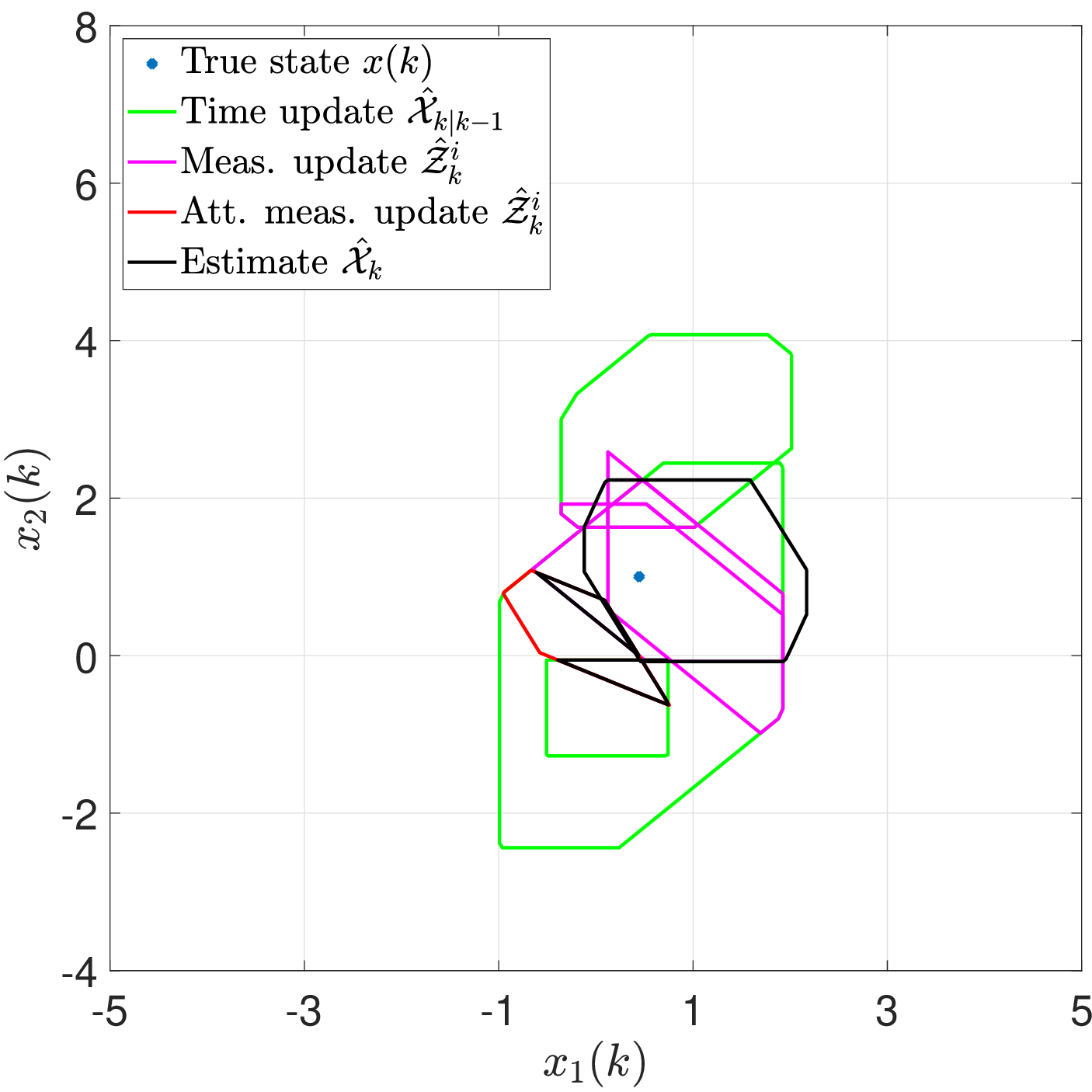}
        \caption{Time $k=3$. Sensors $\{1,4\}$ are attacked.}
        \label{fig:k3_atts1s4B}
    \end{subfigure}
\caption{Snapshots of estimated sets using Algorithm~\ref{alg:s3e} while applying a complexity reduction technique to overapproximate the estimated intersecting sets by one set.}
    \label{fig:illustrative4SensorsReduce}
\end{figure*}

\subsection{Three-story Building Structure}

We now consider a three-story building structure of \cite{truong2021} described by a mechanical system
\begin{equation}
\label{building_sys}
M\ddot{q}(t) + D\dot{q}(t) + Sq(t) = G u(t),
\end{equation}
where $q(t)\in\mathbb{R}^3$ is the vector of relative horizontal displacements of the floors and $u(t)\in\mathbb{R}$ is the ground acceleration due to an earthquake, which is a measured input signal. Also, $M\in\mathbb{R}^{3\times 3}$ is the mass matrix, $D\in\mathbb{R}^{3\times 3}$ is the damping matrix, $S\in\mathbb{R}^{3\times 3}$ is the stiffness matrix, and $G\in\mathbb{R}^3$ is the loading vector. 
The parameter values of the system \eqref{building_sys} are provided by \cite{truong2021} as:
\begin{align*}
    M &= \diag([\begin{array}{ccc} 478350 & 478350 & 517790 \end{array}]) \quad \text{(kg)} \\
    D &= 10^5 \times \left[\begin{array}{ccc}
        7.7626 & -3.7304 & 0.6514 \\
        -3.7304 & 5.8284 & -2.0266 \\
        0.6514 & -2.0266 & 2.4458
    \end{array}\right] \quad \text{(Ns/m)} \\
    S &= 10^8 \times \left[\begin{array}{ccc}
        4.3651 & -2.3730 & 0.4144 \\
        -2.3730 & 3.1347 & -1.2892 \\
        0.4144 & -1.2892 & 0.9358
    \end{array}\right] \quad \text{(N/m)} \\
    G &= [\begin{array}{ccc} 478350 & 478350 & 517790 \end{array}]^\top \quad \text{(kg)}.
\end{align*}
By considering the state $x(t)=[\begin{array}{cc} q(t)^\top & \dot{q}(t)^\top \end{array}]^\top$, we can obtain the state-space representation in continuous time
\[
\dot{x} = A_c x + B_c u
\]
where
\[
A_c = \left[\begin{array}{cc}
    0_{3\times 3} & I_3 \\
    -M^{-1}S & -M^{-1}D
\end{array}\right], \quad B_c = \left[\begin{array}{c}
    0_{3\times 1} \\ -M^{-1}G
\end{array}\right].
\]
After discretization with sample time $\delta=10^{-3}$, we obtain the system in the form \eqref{eq:system_state}, where
\begin{align*}
A = \exp(A_c \delta), \quad
B = A_c^{-1}(A - I_6) B_c.
\end{align*}
Here, our goal is to monitor the building dynamics under an earthquake using a secure set-based state estimation algorithm.

\begin{figure*}[!ht]
    \centering
    \begin{subfigure}[h]{0.32\textwidth}
     \centering
        \includegraphics[width=\textwidth]{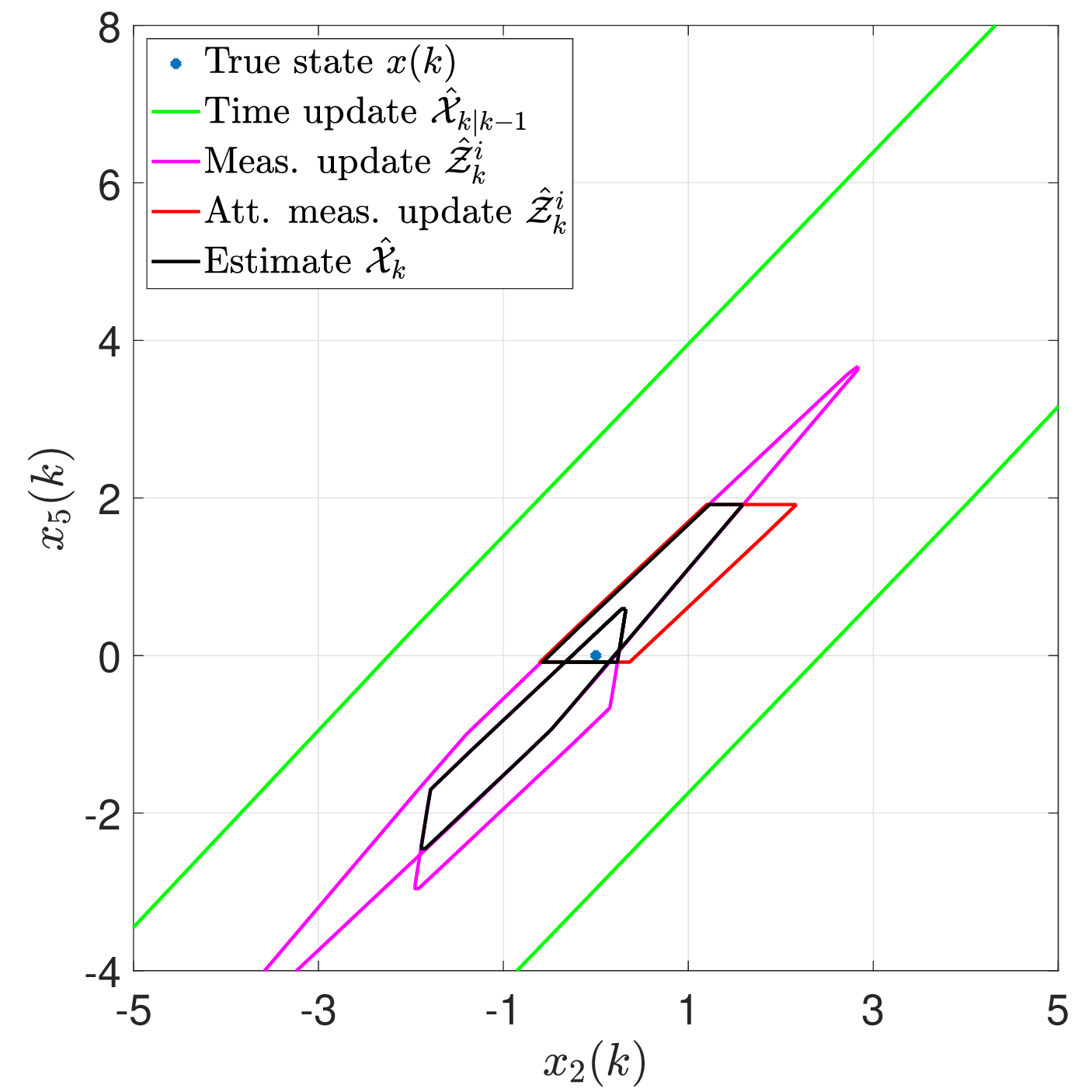}
        \caption{Time $k=1$, Sensor 2 attacked.}
        \label{fig:keq1_s2att}
    \end{subfigure}
    \begin{subfigure}[h]{0.32\textwidth}
     \centering
        \includegraphics[width=\textwidth]{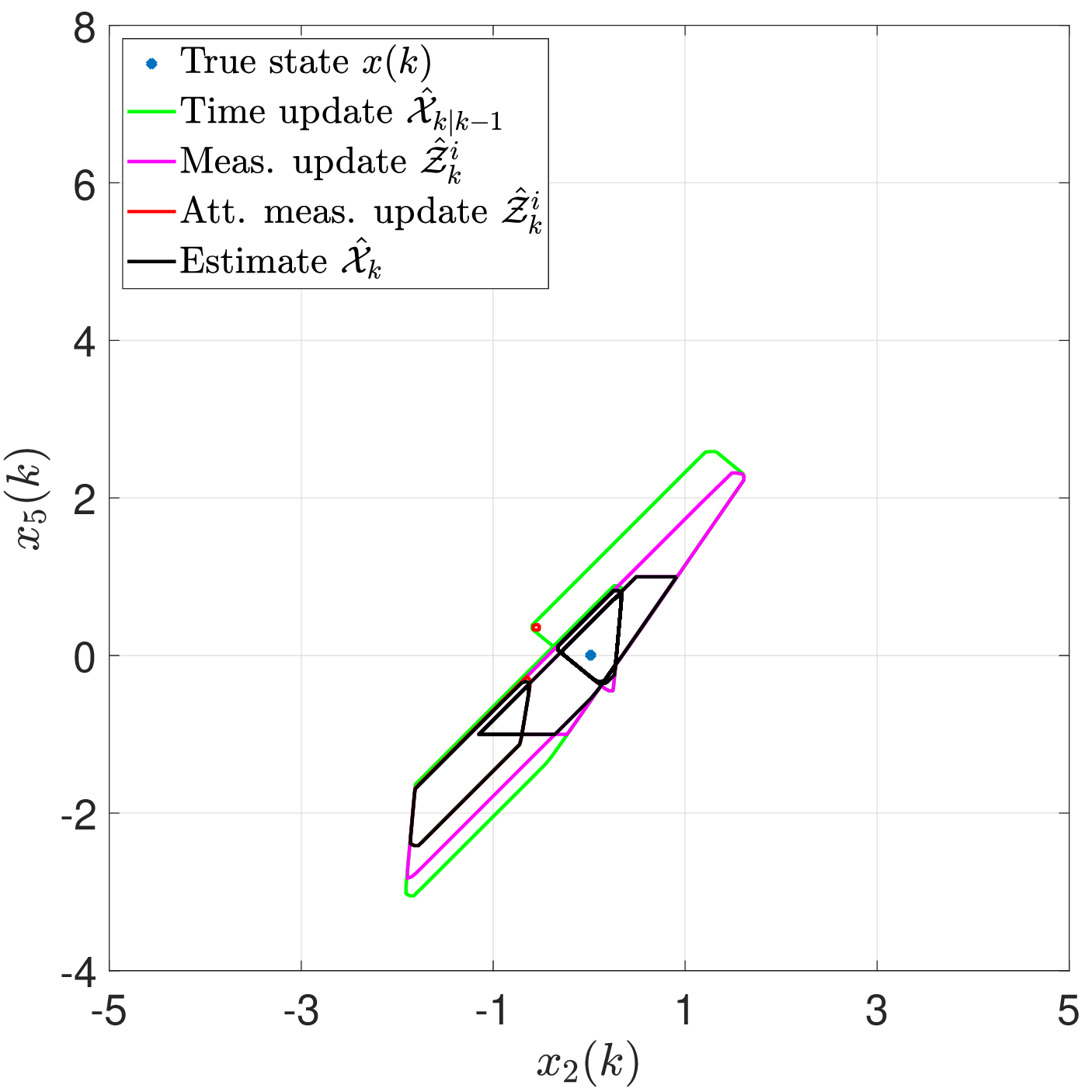}
        \caption{Time $k=2$, Sensor 3 attacked.}
        \label{fig:keq2_s3att}
    \end{subfigure}
    \begin{subfigure}[h]{0.32\textwidth}
     \centering
        \includegraphics[width=\textwidth]{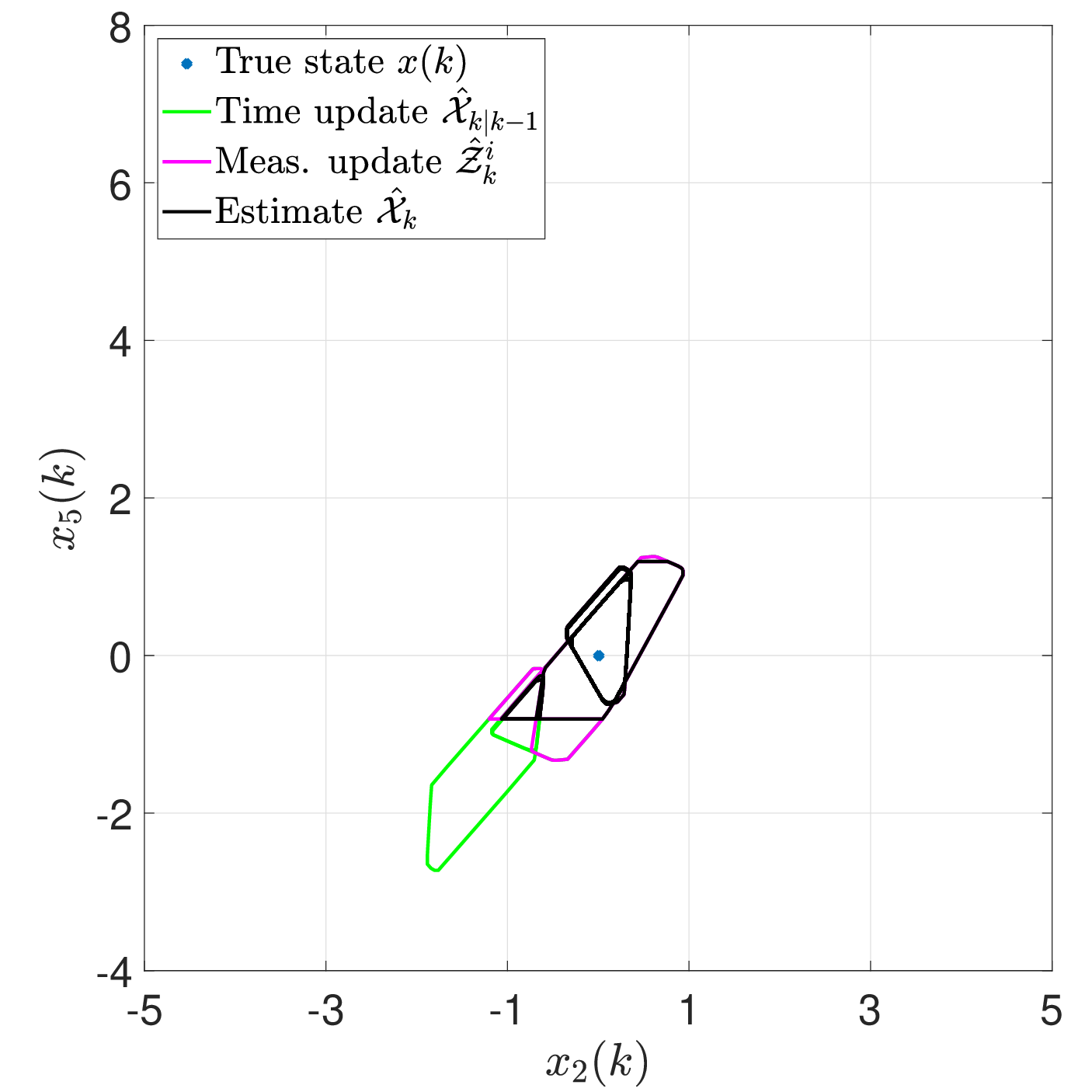}
        \caption{Time $k=3$, Sensor 1 attacked.}
        \label{fig:keq3_s1att}
    \end{subfigure}
\caption{Snapshots of estimated sets using Algorithm~\ref{alg:s3e}.}
    \label{fig:buildingExample}
\end{figure*}

We assume that each floor of the building is equipped with a sensor, i.e., $p=3$, that measures the relative displacement and the velocity of that floor, which can be collected in the output vector $y_i(k)\in\mathbb{R}^3$ as given by \eqref{eq:system_output}, for $i\in\mathbb{Z}_{[1,3]}$, where
\[
\begin{array}{c}
    C_1 = \left[\begin{array}{cccccc}
        1 & -1 & 0 & 0 & 0 & 0 \\
        1 & 0 & -1 & 0 & 0 & 0 \\
        0 & 0 & 0 & 1 & 0 & 0
    \end{array}\right], \quad 
    C_2 = \left[\begin{array}{cccccc}
        -1 & 1 & 0 & 0 & 0 & 0 \\
        0 & 1 & -1 & 0 & 0 & 0 \\
        0 & 0 & 0 & 0 & 1 & 0
    \end{array}\right] \\ [.75cm]
    C_3 = \left[\begin{array}{cccccc}
        -1 & 0 & 1 & 0 & 0 & 0 \\
        0 & -1 & 1 & 0 & 0 & 0 \\
        0 & 0 & 0 & 0 & 0 & 1
    \end{array}\right].
\end{array}
\]
We suppose that the attacker can compromise only one sensor at each time, i.e., $q=1$, whose rationale is explained below.
Although the pairs $(A,C_1)$, $(A,C_2)$, and $(A,C_3)$ are `theoretically' observable, they are not `practically' observable because the corresponding observability matrices have some singular values very close to zero.
Therefore, in this example, we assume redundant observability from every pair of sensors.
Our goal is to monitor the building's floor displacements and velocities irrespective of the compromised sensor.

Suppose the dynamics are corrupted by the process noise $w(k)$, which is bounded by $\mathcal{W}=\zono{0,\sigma_\mathcal{W} I_6}$ with $\sigma_\mathcal{W} =0.02$. 
Similarly, the sensor measurements are corrupted by noise $v_1(k)$, $v_2(k)$, and $v_3(k)$, respectively, which are bounded by $\mathcal{V}_i= \zono{0,\sigma_\mathcal{V} I_3}$ with $\sigma_\mathcal{V} =1$.

To illustrate the efficacy of our algorithm, we apply a similar attack as in \eqref{eq:attgen} to the sensors in which the attacker randomly chooses a sensor~$i$ at every time step~$k$ and injects false data into its measurement $y_i(k)$. In Fig.~\ref{fig:keq1_s2att}, sensor~$2$ is compromised. We compute the estimated set (black), which contains the true state. Then, sensor~$3$ is attacked in Fig.~\ref{fig:keq2_s3att}, in which the time update set $\hat{\mathcal{X}}_{k|k-1}$ shrinks due to the progressive intersection. Finally, sensor~$1$ is attacked in Fig.~\ref{fig:keq3_s1att}. The true state $x(k)$ remains enclosed by the estimated measurement update sets at every time step. Also, the estimation error remains bounded, and the attacker cannot destroy the accuracy of the set-based state estimate. Finally, we remark that the intersections depicted in the figure may not appear accurate because the figure shows a projection of six-dimensional sets on a two-dimensional plane.

\section{Conclusion and Future Work}
\label{sec:conclude}

We presented a secure set-based state estimation algorithm that ensures the inclusion of the true state of an LTI system in the estimated set even when an attacker compromises all but one sensor. 
We achieved this by constructing agreement sets from the intersection of various observable combinations of measurement update sets. We showed that our algorithm guarantees the inclusion of the true state in the estimated set. Moreover, we proposed sufficient conditions for detecting, identifying, and filtering the attack signals and presented a simple algorithm to identify the set of compromised sensors at every time instant. The proposed algorithm may find applications in the safety verification of safety-critical systems whose multiple sensors could be compromised by an attacker.

While our algorithm's worst-case complexity may increase over time under intelligently designed stealthy attacks, we argued that such attacks are challenging to execute due to the requirement of a complete understanding of the system and estimation algorithm, and substantial computational resources. Nonetheless, we suggested various strategies to reduce the complexity of our algorithm to facilitate its implementation. 

Our future work will focus on the set-based secure state estimation of nonlinear systems and developing a data-driven approach for secure estimation when the system model is unknown. We also highlight several open questions that remain unaddressed in this paper. For instance, devising better agreement protocols that can improve the lower bounds on the detectable, identifiable, and filterable attack signals will further strengthen the set-based estimation approach presented in this paper. The question of computational complexity under stealthy attacks remains to be addressed rigorously. The stability of the estimated set is only addressed for the case when strictly less than half of the sensors are attacked. When more than half of the sensors are attacked, it is challenging to upper-bound the estimated set by a contracting set. Finally, from the attacker's perspective, analyzing the complexity of generating stealthy attacks that increase the complexity of the set-based estimation is an interesting research problem.



\bibliographystyle{apalike}
\bibliography{biblio}

\end{document}